\DeclareMathOperator*{\argmax}{arg\,max}
\newcommand{\MALP}{\mathbf{\widehat{LP}_t}}
\newtheorem{theorem}{Theorem}
\newtheorem{lemma}{Lemma}
\newtheorem{remark}{Remark}
\newtheorem{definition}{Definition}
\newtheorem{problem}{Problem}
\newtheorem{proof}{Proof}
\newtheorem{assumption}{Assumption}
\newenvironment{assumpbis}[1]
  {%
   \addtocounter{assumption}{-1}%
   \begin{assumption}}
  {\end{assumption}}
\begin{document}
\title{Optimal Messaging Strategy for Incentivizing Agents in Dynamic Systems}
\author{Renyan Sun and Ashutosh Nayyar%
\thanks{Renyan Sun and Ashutosh Nayyar are with the Department of Electrical and Computer Engineering, University of Southern California, Los Angeles, CA, USA (email: renyansu@usc.edu, ashutosn@usc.edu). This work was supported in part under {NSF} grants ECCS 2025732 and ECCS 1750041.}.%
}
\maketitle

\begin{abstract}
We consider a finite-horizon discrete-time dynamic system jointly controlled by a designer and one or more agents, where the designer can influence the agents' actions through selective information disclosure.  At each time step, the designer sends a message to the agent(s) from a prespecified message space. The designer may also take an action that directly influences system dynamics and rewards.   Each agent uses its received message (and its own information) to choose its action.   We are interested in the setting where the designer would like to incentivize each agent to play a specific strategy. We consider a notion of incentive compatibility that is based on sequential rationality at each realization of the common information between the designer and the agent(s).
Our objective is to find a  messaging and action strategy for the designer that maximizes its total expected reward while   incentivizing each agent to follow a prespecified  strategy. Under certain assumptions on the information structure of the problem, we show that an optimal designer strategy can be computed using a   backward inductive algorithm that solves a family of linear programs. 
\end{abstract}

\begin{IEEEkeywords}
Information design, Markov decision processes, multi-agent systems, stochastic games.
\end{IEEEkeywords}

\section{Introduction}
\label{sec:introduction}
\IEEEPARstart{D}{ynamic} games model sequential decision-making problems where multiple self-interested agents take actions to influence the evolution of a dynamic system. The information structure of a dynamic game, i.e. a  specification of what information is available to each agent for each decision it has to make, plays a crucial role in the study of such games. Dynamic games with a variety of information structures have been investigated in the literature \cite{basar1999dynamic, filar2012competitive,maskin2001markov, gensbittel2015value, zheng2013decomposition, li2014lp, li2018lp, kartik2021upper, gupta2014common, gupta2016dynamic, ouyang2015dynamic,nayyar2013common, ouyang2016dynamic, tavafoghi2016stochastic,vasal2018systematic,tang2023dynamic}.  The most basic solution concept in these games is the Nash equilibrium -- a strategy profile (i.e. a strategy for each agent) where no agent has an incentive to deviate unilaterally \cite{osborne1994course}. Various refinements and modifications of Nash equilibrium such as Markov Perfect Equilibrium \cite{maskin2001markov}, sub-game perfect equilibrium \cite{osborne1994course}, common information based equilibria  \cite{nayyar2013common, ouyang2016dynamic, vasal2018systematic}, $\epsilon-$ approximate correlated equilibrium  \cite{pmlr-v202-liu23ay} have been developed and studied in the literature.
 However, because agents are strategic   and  only interested in optimizing their own individual objectives, the agents' behavior and the resulting outcomes that emerge at equilibria may not be desirable from a system designer or a social welfare perspective. 
 
 One way a designer can try to influence agents' behavior is by selectively revealing some private  information (that only the designer knows) to  the agents.  Such information can alter an agent's belief about the state of the world and other agents, thereby influencing the actions it takes to optimize its objective. In game theory and economics literature, the problem of incentivizing strategic agents to take actions aligned with the system designer's objective through selective information disclosure is referred to as the ``Information Design" or ``Bayesian persuasion'' problem (see \cite{bergemann2019information} and references therein).

Much of the existing literature on information design has focused on {static} problems that involve  one-shot decisions with no temporal evolution of the state of the world or of information. Starting with the work in  \cite{kamenica2011bayesian}, a  variety of static information design problems have been investigated \cite{kamenica2019bayesian,bergemann2019information,akyol2016information} including those with multiple agents \cite{bergemann2016information, bergemann2016bayes, tavafoghi2017informational}, agents with different prior beliefs \cite{alonso2016bayesian}, multiple designers \cite{gentzkow2017bayesian, li2018multplesenders} and multi-dimensional state \cite{tamura2012theory}.  

However, many real-world scenarios involve dynamic environments where the state of the world and/or information about it evolves  over time, giving rise to \emph{dynamic information design} problems. In such settings, a designer can  disclose information sequentially over time, and agents may need to take a sequence of actions based on evolving information. A designer interested in long-term performance must consider the implications of information disclosure on both present and future agent behavior. Similarly, agents may need to take into account the effects of their current actions on future outcomes as well as future information. Such temporal interdependencies make dynamic information design problems particularly challenging. 

A common approach for simplifying dynamic information design problems is to assume that  agents and/or the designer are myopic, i.e.,  they are only interested in immediate outcomes  and do not consider future consequences of their choices. Works where both the sender (designer) and the receiver (agent)  of information are myopic include \cite{farokhi2016estimation, sayin2016strategic}. In contrast, \cite{best2016honestly, best2024persuasion, ely2017beeps, lingenbrink2019optimal, renault2017optimal, sayin2021deception, sayin2019optimality, sayin2019hierarchical} consider settings where a long-term-optimizing designer interacts with either a myopic agent or a sequence of short-lived agents that do not consider future consequences of their actions.
The more complex setting where both the designer and the agents seek to optimize their respective long-term objectives has also been explored in several works including \cite{farhadi2018static,meigs2020optimal, au2015dynamic, doval2020sequential, ely2020moving, tavafoghi2017informational, farhadi2022dynamic }. These works explore the challenges associated with balancing present and future incentives and offer insights into how strategic information disclosure can influence multi-stage decision-making processes.
 \cite{au2015dynamic, doval2020sequential, ely2020moving} assume that the state of the world  does not  change with time whereas   \cite{tavafoghi2017informational, farhadi2018static, farhadi2022dynamic, meigs2020optimal} model the state of the world as an uncontrolled Markov chain. In contrast, the model we consider allows for both the agents and the designer to take actions that influence the evolution of the system state.

 We first consider a dynamic setting with one designer and one agent. Both the designer and the agent may have some  information about the current state of the dynamic system. We partition the information at each time into common information (available to both designer and agent) and private information. At each time step, the designer sends a message {from a prespecified message space} to the agent. The designer may also take an action that directly influences system dynamics and rewards.   The agent uses the received message (and its own information) to choose its action.   We are interested in the setting where the designer uses a fixed action strategy $h^0$ and would like to incentivize the agent to play a specific strategy $h^1$ (see Section \ref{sec:model} for details). We consider a notion of incentive compatibility that is based on sequential rationality at each realization of the  common information between the designer and the agent. This version of incentive compatibility, which we refer to as \emph{common information based sequential rationality},  is stricter than a  Nash equilibrium based notion of incentive compatibility that does not take the sequential nature of the problem into account. Our version of incentive compatibility requires the agent to be incentivized to use strategy $h^1$ at each time and for each realization of common information.
Our goal in Problem \ref{prob: backup} of Section \ref{sec:SingleAgent} is to find a messaging strategy for the designer that maximizes its cumulative expected reward   while ensuring that the agent is incentivized to use strategy  $h^1$.
Our problem differs from prior work in dynamic information design in terms of (i) the system dynamics and information structures considered, (ii) the designer's goal of incentivizing a specific strategy for the agent, and (iii) the use of a prespecified message space. 
We generalize our approach in Problem \ref{prob: backup} to investigate the setting where the designer can jointly optimize both its messaging and its action strategies (Problem \ref{prob:des action unfixed}). We then investigate the   setting with  multiple agents (Problem \ref{prob: multi}). 

Our solution approach for finding the optimal messaging strategy in  Problem \ref{prob: backup}  proceeds as follows. For an arbitrarily fixed messaging strategy for the designer, we  find   \emph{necessary and sufficient} conditions such that $h^1$ satisfies common information based sequential rationality given the designer's strategy. The  designer's problem then reduces to one of finding a messaging strategy that maximizes its total expected reward while  ensuring that the  sequential rationality  conditions are met. We construct an algorithm to solve the designer's strategy optimization problem. Our algorithm requires solving a family of linear programs in a backward inductive manner. This approach (and the resulting algorithm) is then extended to address  the problem of  jointly optimizing designer's messaging and action strategies, as well as to the problem with  multiple agents. 

The models we consider allow for a variety of system dynamics and information structures. As noted earlier, both the designer and the  agents  may have some private information and the   ability to take actions that directly influence the evolution of the system state. Our main contribution is to  show that, under certain assumptions on the  information structure of the problem, we can compute optimal strategies for the designer   using backward inductive algorithms that involve solving a family of linear programs. This gives a computationally promising approach for addressing a variety of  dynamic information design problems with prespecified message spaces. 

\emph{Notation:}
Random variables are denoted by upper case letters and their realizations by corresponding lower case letters. All random variables take values in finite sets which are denoted by the calligraphic font of the corresponding upper case letter. For time indices $t_1\leq t_2$, ${X}_{t_1:t_2}$ is a short hand notation for the collection of variables  $({X}_{t_1},{X}_{t_1+1},...,{X}_{t_2})$. Similarly, ${X}^{0:2}$ is a short hand notation for  $({X}^0, {X}^1, {X}^2)$. $\mathbb{P}(\cdot)$ denotes the probability of an event; $\mathbb{E}[\cdot]$ denotes the expectation of a random variable. $\mathbb{P}^g(\cdot)$ (resp. $\mathbb{E}^g[\cdot]$) denotes that the probability (resp. expectation) depends on the choice of  function(s) $g$. The conditional probability $\mathbb{P}(\cdot|C_t=c_t)$ is sometimes written as  $\mathbb{P}(\cdot|c_t)$. 
$M^1_t \sim D_t^m$ means that $M^1_t$ is randomly generated according to the distribution $D^m_t$.

\emph{Organization:} 
The rest of the paper is organized as follows. We consider the setting with one designer and one agent  in Section \ref{sec:SingleAgent}. We generalize to  multiple agents in Section \ref{sec:MultiAgents}. We consider an example in Section \ref{sec:example v2} and we conclude in Section \ref{sec:conclusion}.

\section{One Designer and One Agent}\label{sec:SingleAgent}
\subsection{Model and Problem Formulation}\label{sec:model}
We consider a discrete-time dynamic system that is jointly controlled by a designer and an agent. For each time $t\in\{1,2,\cdot\cdot\cdot,T\}$, $X_t \in \mathcal{X}_t$ is the state of the system at time $t$,  $U_t^0 \in \mathcal{U}^0_t$ is the designer's action at time $t$, and  $U_t^1 \in \mathcal{U}^1_t $ is the agent's action  at time $t$. The system evolves as follows
\begin{equation}\label{eq:dyn}
    X_{t+1} = f_t(X_t, U_t^0, U_t^1, N_t),
\end{equation}
where $N_t \in \mathcal{N}_t$ is the noise in the dynamic system at time $t$.

At each time $t$, the designer gets some private information about the state of the system.  The designer can send a message (from a prespecified message space) to the agent to influence its behavior. The designer's objective in sending its message is to strategically reveal information to the agent in order to incentivize it to behave in a manner preferred by the designer. We will make this objective more precise later in this section. First, we describe the information structure and the messaging mechanism in more detail.

\emph{Information Structure and Messages:} Let $I_t^0$ denote the information available to the designer at time $t$.  For each time $t$, we split $I_t^0$ into two components -- one is common (or public) information $C_t \in \mathcal{C}_t$ that is available to the designer as well as to the agent, the other is private information $P_t^0 \in \mathcal{P}_t^0$ that is  available only to the designer. Similarly, let $I_t^1$ denote the information available to the agent at the beginning of time $t$. This information can also be split into the  common information $C_t$ and the agent's  private information $P_t^1 \in \mathcal{P}_t^1$.

At each time $t$, the designer generates a \emph{message} $M_t^1 \in \mathcal{M}_t^1$. This message is generated randomly according to a probability distribution $D_t^m$. The distribution $D_t^m$ is selected by the designer as a function of its information at time $t$, i.e.,    
\begin{equation}
   M^1_t \sim D_t^m,  \quad \text{and} \quad  D_t^m = g_t^m(P_t^0, C_t),
\end{equation}
where $g^m_t$ is referred to as the \emph{designer's messaging strategy at time $t$}. We call the collection $g^m:=(g^m_1,g^m_2,\ldots,g^m_T)$ the designer's messaging strategy. Let $\mathcal{G}^m$ denote the set of all possible messaging strategies for the designer. With a slight abuse of notation, we will use $g^m_t(m^1_t|p^0_t,c_t)$ to indicate the \emph{probability} of generating the message $m^1_t$ when the   designer is using messaging strategy $g^m_t$ at time $t$ and the realizations of its private and common information are $p^0_t,c_t$ respectively.  

In addition, the designer generates an action $U_t^0$  as a function of its information at time $t$, i.e.,
\begin{equation}\label{eq:designer action}
   U_t^0 = g_t^0(P_t^0, C_t),
\end{equation}
where $g^0_t$ is referred to as the \emph{designer's action strategy at time $t$}. We call the collection $g^0:=(g^0_1,g^0_2,\ldots,g^0_T)$ the designer's action strategy. Let $\mathcal{G}^0$ denote the set of all possible action strategies for the designer.

After the message $M^1_t$ is generated by the designer, it is sent to the agent. Then, the agent generates an action $U_t^1$ as a function of its information and the message it received at time $t$, i.e., 
\begin{equation}\label{eq:agent_g}
    U_t^1 = g_t^1(M_t^1, P_t^1, C_t),
\end{equation}
where  $g_t^1$ is the agent's action strategy  at time $t$. The collection $g^1 := (g^1_1,g^1_2,...,g^1_T)$ is referred to as the \emph{agent's action strategy}. Let $\mathcal{G}^1$ denote the set of all possible action strategies for the agent. The strategy triplet  $g:=(g^m, g^0, g^1)$, is called the  \emph{strategy profile}. $(g^m, g^0, g^1)_{t:T}$ denotes the strategies used from time $t$ to $T$.

{We assume that  the initial state $X_1$ and  the noise variables $N_t,  t=1,2,\ldots,T,$ are finite-valued, mutually independent random variables with the distribution of $X_1$ being $P_{X_1}$ and the distribution of $N_t$ being $Q_t$. Further, all system variables (i.e., states, actions, messages, common and private information, etc.) take values in finite sets.}

\emph{Reward structure:} The agent receives a reward $r^1_t(X_t, U^0_t, U^1_t)$ at each time $t$. Note that the reward  is indirectly influenced by the designer's message since the agent uses the message to select its action.  The designer receives a reward $r^0_t(X_t, U^0_t, U^1_t)$ at time $t$.

The total expected reward for the agent under the strategy profile $g = (g^m, g^0, g^1)$ is given as:
\begin{equation}\label{eq: agent reward-to-go}
    J^1(g^m, g^0, g^1) := \mathbb{E}^{g}\left[\sum_{t=1}^T r_t^1(X_t, U_t^0, U_t^1)\right].
\end{equation}
Similarly, the designer's total expected reward under the strategy profile $g = (g^m, g^0, g^1)$ is given  as:
\begin{equation}\label{eq: designer cost-to-go}
    J^0(g^m, g^0, g^1) := \mathbb{E}^{g}\left[\sum_{t=1}^T r_t^0(X_t, U_t^0, U_t^1)\right].
\end{equation}

We make the   following assumptions about the system model and the information structure.
\begin{assumption}\label{assm:1}
    We assume that the private and common information evolve in the following manner.
    \begin{enumerate}
        \item Private information $P_t^i$, $i=0,1,$ evolves as follows: for  $t\geq 1$
        \begin{align}\label{assump: evo of private}
            P^i_{t+1} &= {\xi}^i_{t+1}(X_t, P_t^i, U_t^0, U_t^1, N_t),
        \end{align}
         where $\xi^i_{t+1}$ is a fixed function. 
        \item For $t \ge 1$,  the common information at time $t+1$, $C_{t+1}$, consists of the common information at time $t$, $C_t$, and an increment $Z_{t+1}$. Further,  $Z_{t+1}$ is given  as
        \begin{equation}\label{equ: info increment}
            Z_{t+1} = \zeta_{t+1}(X_t, P_t^{0}, P^1_t, U_t^0, U_t^1, N_{t}),
        \end{equation}
        where $\zeta_{t+1}$ is a fixed function.
        \item At $t=1$,  $P^{0}_1, P^1_1$ and $C_1$ are generated based on $X_1$ and $N_1$ according to a given conditional distribution $\Lambda(p^{0}_1, p^1_1, c_1|x_1,n_1)$.
    \end{enumerate} 
\end{assumption}
Given a strategy profile $g =(g^m,g^0,g^1),$  we will be interested in the  conditional distribution of the state and private information given the common information at time $t$, i.e, $\mathbb{P}^{g}(X_t = \cdot, P_t^{0,1} = \cdot | C_t).$ We will refer to these distributions as the \emph{common information based conditional beliefs} under the strategy profile $g$. We make the following assumption about these beliefs.

\begin{assumption}[\textit{Strategy-independent  Beliefs}] \label{assm:2} 
The common information based conditional beliefs  do not depend on the strategy profile. More precisely, consider any two strategy profiles $g=(g^m,g^0,g^1)$ and $\tilde{g}=(\tilde{g}^m, \tilde{g}^0, \tilde{g}^1)$ and any realization $c_t$ of common information $C_t$ that has non-zero probability under the two strategy profiles. Then, the corresponding common information based conditional beliefs are  the same, i.e.,
    \begin{align}\label{equ: strategy independence}
        \mathbb{P}^{g}(X_t = \cdot, P_t^{0,1} = \cdot  | c_t) 
        = \mathbb{P}^{\tilde{g}}(X_t = \cdot, P_t^{0,1} = \cdot | c_t).
    \end{align}
\end{assumption}
Since the common information based beliefs are strategy-independent under Assumption \ref{assm:2}, we can associate a unique belief with each realization of common information, i.e., given a realization $c_t$ of common information at time $t$, we can define the following belief on $X_t,P^{0,1}_t$:
\begin{equation}\label{eq:beliefs}
    \pi_t \left(x, p^{0,1} | c_t\right) := \mathbb{P}^{g}\left(X_t = x, P_t^{0,1} = p^{0,1} | C_t =c_t\right),
\end{equation}
where $g$ is any strategy profile under which $c_t$ has non-zero probability.

Assumptions 1 and 2 are analogous to the system model and information structure assumptions made in \cite{nayyar2013common}. We refer the reader to \cite{nayyar2013common} for examples of models where these assumptions are satisfied. 

\emph{Designer's objective:}  
We first consider   the setting where the designer uses a fixed action strategy\footnote{In Section \ref{sec:action opt}, we will consider the case where the designer can optimize over its action strategy.}  $h^0$ and  would like to incentivize the agent to use a specific  strategy $h^1$.  For example,  the designer may be interested in incentivizing \emph{obedience} of its message by the agent \cite{bergemann2016information,bergemann2016bayes,doval2020sequential}, i.e. the designer would like to have $U^1_t = M_t^1$. Note that obedience assumes that the messages sent by the designer take values in the  action space $\mathcal{U}^1_t$.  In this paper, we will consider the more general case where the message and action spaces may be different and the designer's preferred strategy  $h^1$ for the agent  may not necessarily be  the obedient strategy. For example, assuming $M_t^1, U^1_t, P^1_t, C_t$ are all binary valued, the designer may be interested in incentivizing the following (non-obedient) strategy:
\begin{equation}
U^1_t = h^1_t(M_t^1,P^1_t,C_t) = \begin{cases}
P^1_t &\text{if $M_t^1=0$}\\
C_t &\text{if $M_t^1=1$}
\end{cases}.
\end{equation}

\emph{Incentive compatibility for the agent:} A minimal requirement for the agent to be incentivized to use $h^1$ is that the total expected reward for the agent when using $h^1$ is at least as large as the total expected reward it could have achieved under any other strategy. We will adopt a stronger notion of incentive compatibility where $h^1$ remains optimal for the agent at every time step and for every realization of the \emph{common information}. We formalize this in the definition below.

\begin{definition}\label{def: sequential rationality}
    We say that agent strategy $h^1$ satisfies common information based sequential rationality (CISR) with respect to the designer messaging strategy $g^m$ and the designer action strategy $h^0$ if the following is true:\\
    For each time $t$ and each possible realization $c_t$ of common information at time $t$,
    \begin{align}
    &\mathbb{E}^{(g^m, h^0, h^1)_{t:T}}\left[\sum_{k=t}^T r_k^1(X_k, U_k^0, U_k^1) \Bigm| c_t\right]\geq \mathbb{E}^{(g^m, h^0, g^1)_{t:T}}\left[\sum_{k=t}^T r_k^1(X_k, U_k^0, U_k^1) \Bigm| c_t\right] \hspace{5pt}\forall g^1 \in  \mathcal{G}^1. \label{eq:cisr}
    \end{align}
 \end{definition}
 The expectation on the left hand side of \eqref{eq:cisr} is to be interpreted as follows: Given $c_t$, we have an associated belief $\pi_t$ on $X_t,P^{0,1}_t$ given by \eqref{eq:beliefs}. With $C_t=c_t,$ $X_t,P^{0,1}_t$ distributed according to $\pi_t$, and future states,  actions, messages and information variables generated using strategies $(g^m, h^0, h^1)_{t:T}$, the  left hand side of \eqref{eq:cisr} is the expected reward-to-go for the agent. A similar interpretation holds for the right hand side of \eqref{eq:cisr}.     
 If all the inequalities in the definition above are true, we will say that   ``$h^1$ satisfies $CISR(g^m,h^0)$''.
 We are interested in the following problem.
 
\begin{problem}\label{prob: backup}
Given a fixed $h^0$ and $h^1$, the designer's goal is to find an optimal messaging strategy $g^m$ that maximizes the designer's total expected reward while ensuring  that $h^1$ satisfies common information based sequential rationality with respect to $g^m, h^0$ as per Definition \ref{def: sequential rationality}. That is, 
    the designer would like to solve the following strategy optimization problem:
    \begin{align*}
    &\max_{g^m \in \mathcal{G}^m} J^0(g^m, h^0, h^1) \\
    &\hspace{10pt}\text{s.t. $h^1$ satisfies $CISR(g^m,h^0)$.}
    \end{align*}
  \end{problem}

\subsection{Solution Approach}\label{sec:SA}
A key feature of the $CISR$ conditions of \eqref{eq:cisr} is that they need to be satisfied at each time and for each possible realization of the common information. The first step in our solution approach is to formulate a backward inductive characterization of $CISR$. This will be useful for decomposing the optimization in Problem \ref{prob: backup} in a sequential manner.

Recall that under Assumption \ref{assm:2} we can associate a  belief $\pi_t(\cdot, \cdot | c_t)$ with each realization $c_t$ of $C_t$ (see \eqref{eq:beliefs}). This belief is the conditional  distribution of $X_t, P^{0,1}_t$ given $C_t=c_t$ under any strategy profile where the realization $c_t$ can occur.  Using this belief and the designer messaging strategy $g^m$, we  define a probability distribution on $X_t, P^{0,1}_t, M_t^1, N_t$ as follows.

\begin{definition}\label{def:eta private information of agents}
    Given a designer messaging strategy $g^m$ and a realization $c_t$ of the common information at time $t$, we define the following common information based belief on $X_t, P^{0,1}_t, M_t^1, N_t$: 
    \begin{equation}\label{eq:beliefs2 private information of agents}
    \eta_t(x_t, p^{0,1}_t,m_t^1, n_t
    | c_t) =Q_t(n_t)\pi_t(x_t,p^{0,1}_t | c_t)g^m_t(m_t^1|p^0_t,c_t),
\end{equation}
where $Q_t(\cdot)$ is the apriori probability distribution of noise $N_t$ and $\pi_t(\cdot, \cdot | c_t)$ is the strategy-independent common information based belief on $X_t,P^{0,1}_t$ given $c_t$.  Further, we denote by $\eta_t(p^1_t, m_t^1 | c_t)$ the  probability of the event $\{P^1_t =p^1_t,M_t^1=m^1_t\}$  under the distribution $\eta_t(\cdot|c_t)$ 
\big($\eta_t(p^1_t, m_t^1 | c_t)$ can be obtained by summing over all other arguments of $\eta_t(\cdot|c_t)$\big).
\end{definition}
It is straightforward to verify that
\begin{align}
    \mathbb{P}^{(g^m,g^0,g^1)}&\left(X_t = x_t, P_t^{0,1} = p^{0,1}_t, M^1_t=m^1_t, N_t=n_t | c_t\right) = \eta_t(x_t, p^{0,1}_t,m_t^1, n_t| c_t),
\end{align}
for any action strategies $g^0,g^1$ of the designer and the agent respectively, and any $c_t$ which has a non-zero probability under the strategy profile $(g^m,g^0,g^1)$.

\subsubsection{Reformulation of the Constraint in Problem \ref{prob: backup}}\label{subsec:constraints}
Consider a fixed designer messaging strategy $g^m$. This $g^m$ induces $\eta_t$ as per Definition \ref{def:eta private information of agents}. In this section, we will develop a reformulation of the condition that ``$h^1$ satisfies $CISR(g^m,h^0)$" in terms of linear inequalities involving $g^m$ and $\eta_t$. To that end, we first recursively define the  following common information based value functions
\begin{equation}
    W_{T+1}^1(c_{T+1}) := 0, \label{eq:W_defa}
\end{equation}
and for $t \le T,$
\begin{align}
    W_t^1(c_t) &:= \mathbb{E}^{\eta_t}[r_t^1(X_t, h_t^0(P_t^0, c_t), h_t^1(M_t^1, P_t^1, c_t))+W_{t+1}^1(c_t, Z_{t+1})|C_t=c_t], \label{eq:W_def}
\end{align}
where $Z_{t+1}$ in \eqref{eq:W_def} is the common information increment at time $t+1$ defined  according to (\ref{equ: info increment}) with control actions  $U_t^0 = h_t^0(P^0_t, c_t), U_t^1 = h_t^1(M_t^1, P^1_t, c_t),$ and the expectation in \eqref{eq:W_def} is with respect to the probability distribution $\eta_t(\cdot|c_t)$ defined in Definition~\ref{def:eta private information of agents}. More explicitly, $W_t^1(c_t)$ can be written as the following expression:
\begin{align}\label{eq:agent value function time any t}
    &W_t^1(c_t) = \sum_{\substack{x\in\mathcal{X}_t,p^0\in\mathcal{P}_t^0 \\p^1\in\mathcal{P}_t^1, m^1 \in \mathcal{M}_t^1,n\in \mathcal{N}_t}}
\eta_t(x,p^0, p^1, m^1, n | c_t) \left[r_t^1(x,u_t^0, u_t^1)+ W_{t+1}^1(c_t, z_{t+1})\right]
\end{align}
where $u^0_t=h^0_t(p^0,c_t)$, $u_t^1 = h_t^1(m^1, p^1, c_t)$ and  $z_{t+1} = \zeta_{t+1}(x, p^{0,1}, u_t^{0,1}, n)$. It can be verified by a backward inductive argument that $W_t^1(c_t)$ is the left hand side of \eqref{eq:cisr} in Definition \ref{def: sequential rationality} for all $c_t$ (we will prove this as part of the proof of Lemma \ref{lem:suff1}).

  Recall that $\eta_t(\cdot|c_t)$ as defined in \eqref{eq:beliefs2 private information of agents} is a probability distribution on $X_t, P^{0,1}_t, M_t^1, N_t$ and that $\eta_t(p_t^1, m_t^1| c_t)$ denotes the probability of the event $\{P^1_t =p^1_t,M_t^1=m^1_t\}$  under $\eta_t(\cdot|c_t)$. Consider a $p_t^1 \in \mathcal{P}_t^1, m_t^1 \in \mathcal{M}_t^1, c_t \in \mathcal{C}_t $ such that $\eta_t(p_t^1, m_t^1| c_t) > 0$. Using $\eta_t(\cdot|c_t)$, we define the following distribution on $X_t, P^{0}_t, N_t$:
\begin{equation}\label{eq:eta_conditioned}
    \mu_t(x_t,p^0_t,n_t|m^1_t,p^1_t,c_t) = \frac{\eta_t(x_t,p^0_t,p_t^1, m_t^1, n_t | c_t)}{\eta_t(p_t^1, m_t^1 | c_t)}
\end{equation}
The interpretation of $\mu_t(\cdot|m^1_t,p^1_t,c_t)$ is that it is the conditional distribution of $X_t,P^{0}_t,N_t$ given $M_t^1=m_t^1, P_t^1 = p_t^1$ when the joint distribution of $X_t, P^{0,1}_t, M_t^1, N_t$ is $\eta_t(\cdot|c_t)$.

The following lemma provides a sufficient condition for the requirement that ``$h^1$ satisfies $CISR(g^m,h^0)$".

\begin{lemma}\label{lem:suff1}
    Suppose that for each $t$, and for all $p_t^1 \in \mathcal{P}_t^1, m_t^1 \in \mathcal{M}_t^1, c_t \in \mathcal{C}_t $ for which $\eta_t(p_t^1, m_t^1| c_t) > 0$,  the following is true:
    \begin{align}
            &h_t^1(m_t^1, p_t^1, c_t) \in \argmax_{u \in \mathcal{U}_t^1} \mathbb{E}^{\mu_t(\cdot|m^1_t,p^1_t,c_t)}[r_t^1(X_t, h_t^0(P_t^0, c_t), u) + W^1_{t+1}(c_t, Z_{t+1})], \label{eq:suff1}
        \end{align}
        where 
        $Z_{t+1}$  is given as   
        \begin{equation} \label{eq:z_eqn1}
        Z_{t+1} = \zeta_{t+1}(X_t, P_t^{0}, p^1_t, h_t^0(P_t^0, c_t) , u, N_{t}),
        \end{equation}
        and   the expectation is with respect to the  distribution $\mu_t(\cdot|m^1_t,p^1_t,c_t)$ defined in \eqref{eq:eta_conditioned}.\\
        Then, the strategy $h^1$ satisfies $CISR(g^m,h^0)$.
\end{lemma}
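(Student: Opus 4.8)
The plan is to prove the lemma by backward induction on $t$, establishing the stronger claim that, for every $t$ and every realization $c_t$ of $C_t$ that has positive probability under some strategy profile, the value function $W^1_t(c_t)$ defined in \eqref{eq:W_def} equals the left-hand side of \eqref{eq:cisr}, and moreover that it dominates the right-hand side of \eqref{eq:cisr} for every alternative agent strategy $g^1$. The base case $t = T+1$ is immediate since both sides are zero by \eqref{eq:W_defa}. For the inductive step, I would fix $c_t$ and expand the reward-to-go under $(g^m, h^0, g^1)_{t:T}$ by conditioning on the realization $(m^1_t, p^1_t)$ of the agent's observables at time $t$: the joint distribution of $X_t, P^{0,1}_t, M^1_t, N_t$ given $c_t$ is exactly $\eta_t(\cdot|c_t)$ (by the identity displayed just after Definition \ref{def:eta private information of agents}), so that conditioning on $\{M^1_t = m^1_t, P^1_t = p^1_t\}$ produces the distribution $\mu_t(\cdot|m^1_t, p^1_t, c_t)$ on $X_t, P^0_t, N_t$ defined in \eqref{eq:eta_conditioned}.

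The key observation is that for any $g^1$, the time-$t$ action $U^1_t = g^1_t(m^1_t, p^1_t, c_t)$ is a fixed element $u \in \mathcal{U}^1_t$ once we have conditioned on $(m^1_t, p^1_t, c_t)$, and the resulting common information increment $Z_{t+1}$ is then given by \eqref{eq:z_eqn1}. Applying the induction hypothesis to each successor realization $c_{t+1} = (c_t, z_{t+1})$ — namely that the reward-to-go from $t+1$ under $(g^m, h^0, g^1)_{t+1:T}$ is at most $W^1_{t+1}(c_t, z_{t+1})$ — the reward-to-go from time $t$ under $(g^m, h^0, g^1)_{t:T}$, conditioned on $(m^1_t, p^1_t, c_t)$, is bounded above by
\[
\max_{u \in \mathcal{U}^1_t} \mathbb{E}^{\mu_t(\cdot|m^1_t, p^1_t, c_t)}\!\left[r^1_t(X_t, h^0_t(P^0_t, c_t), u) + W^1_{t+1}(c_t, Z_{t+1})\right].
\]
By hypothesis \eqref{eq:suff1}, this maximum is attained at $u = h^1_t(m^1_t, p^1_t, c_t)$, so the same conditional reward-to-go under $h^1$ equals this maximum; averaging over $(m^1_t, p^1_t)$ with weights $\eta_t(p^1_t, m^1_t | c_t)$ then gives that $W^1_t(c_t)$ (which by \eqref{eq:agent value function time any t} is precisely this average evaluated at $u = h^1_t(\cdot)$) both equals the reward-to-go under $h^1$ and dominates the reward-to-go under any $g^1$, completing the induction and hence the proof.

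A couple of bookkeeping points I would be careful about. First, I need to verify that, under $(g^m, h^0, g^1)_{t:T}$ with the time-$t$ belief $\pi_t$, the conditional law of $X_t, P^{0,1}_t, M^1_t, N_t$ given $c_t$ really is $\eta_t(\cdot|c_t)$ regardless of $g^1$ — this follows because $M^1_t$ depends only on $(P^0_t, c_t)$ through $g^m_t$, $N_t$ is independent with law $Q_t$, and $\pi_t$ is strategy-independent by Assumption \ref{assm:2}; the agent's time-$t$ action has not yet been taken, so $g^1$ does not enter. Second, I should note that \eqref{eq:suff1} is only imposed on pairs $(m^1_t, p^1_t)$ with $\eta_t(p^1_t, m^1_t|c_t) > 0$, but those are exactly the pairs that receive positive weight in the average, so realizations on which the constraint is vacuous contribute nothing and cause no difficulty. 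The main (though mild) obstacle is purely organizational: the expectations in \eqref{eq:cisr}, \eqref{eq:W_def}, and \eqref{eq:suff1} are taken under three different but compatible probability measures ($\mathbb{P}^{(g^m, h^0, \cdot)}$ conditioned on $c_t$, then $\eta_t(\cdot|c_t)$, then $\mu_t(\cdot|m^1_t, p^1_t, c_t)$), and the argument hinges on correctly tracking how conditioning and the tower property relate them; once this is set up the inequality chain is routine.
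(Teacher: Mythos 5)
Your proposal is correct and follows essentially the same route as the paper's proof in Appendix A: a backward induction whose hypothesis couples (i) the identity $W^1_{t+1}(c_{t+1}) = $ the agent's reward-to-go under $h^1$ with (ii) domination of the reward-to-go under any $g^1$, carried out by conditioning on $(m^1_t,p^1_t)$ to pass from $\eta_t(\cdot|c_t)$ to $\mu_t(\cdot|m^1_t,p^1_t,c_t)$ and then invoking \eqref{eq:suff1}. The only cosmetic difference is that you anchor the induction at $t=T+1$ with the zero value function, whereas the paper spells out $t=T$ as the base case; the substance is identical.
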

\begin{proof}
    See Appendix \ref{appendix: proofofsufflemma}.
\end{proof}

The next lemma shows that the condition 
in Lemma \ref{lem:suff1} is also necessary for $h^1$ to satisfy $CISR(g^m,h^0)$.

\begin{lemma}\label{lem:nece1}
    Suppose the strategy $h^1$ satisfies $CISR(g^m,h^0)$. Then,   \eqref{eq:suff1} holds for each $t =1,2,\ldots,T,$ and for  all $p_t^1 \in \mathcal{P}_t^1, m_t^1 \in \mathcal{M}_t^1, c_t \in \mathcal{C}_t $ for which $\eta_t(p_t^1, m_t^1| c_t) > 0$.
\end{lemma}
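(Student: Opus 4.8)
The plan is to prove the contrapositive-flavored converse directly: assuming $h^1$ satisfies $CISR(g^m,h^0)$, I will show that the pointwise (in $m_t^1,p_t^1,c_t$) maximization condition \eqref{eq:suff1} must hold, working backward in time from $t=T$. The central idea is that the $CISR$ inequality \eqref{eq:cisr}, which compares $h^1$ against \emph{all} agent strategies $g^1 \in \mathcal{G}^1$ at every realization $c_t$, is strong enough to force optimality of $h^1_t(m_t^1,p_t^1,c_t)$ conditioned on each individual $(m_t^1,p_t^1)$ pair, not merely on average. The key observation making this work is that the agent, at time $t$, knows $M_t^1$, $P_t^1$, and $C_t$, so it can condition its action $U_t^1$ on the realized $(m_t^1,p_t^1,c_t)$; hence a strategy $g^1$ that differs from $h^1$ only at one particular $(m_t^1,p_t^1,c_t)$ triple is admissible, and the $CISR$ inequality applied to it isolates the contribution of that triple.

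First I would set up an induction on $t$ running backward from $T$ to $1$. The inductive hypothesis at stage $t$ would be that \eqref{eq:suff1} holds for all times $k > t$; note that by Lemma \ref{lem:suff1} this already tells us that, for the purposes of computing reward-to-go from time $t+1$ onward under $(g^m,h^0,h^1)$, the relevant continuation value is exactly $W^1_{t+1}$ as defined in \eqref{eq:W_def}. Then, fixing $c_t$ with positive probability, I would take an arbitrary $g^1$ that agrees with $h^1$ at all times $k>t$ and at time $t$ agrees with $h^1_t$ except possibly on a single triple $(\bar m_t^1,\bar p_t^1,c_t)$ with $\eta_t(\bar p_t^1,\bar m_t^1\mid c_t)>0$. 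Expanding both sides of \eqref{eq:cisr} at time $t$ using the belief $\eta_t(\cdot\mid c_t)$ and the tower property, the terms for all triples other than $(\bar m_t^1,\bar p_t^1,c_t)$ cancel, and what remains is precisely the statement that
\begin{align}
&\mathbb{E}^{\mu_t(\cdot|\bar m^1_t,\bar p^1_t,c_t)}\!\left[r_t^1(X_t, h_t^0(P_t^0, c_t), h^1_t(\bar m^1_t,\bar p^1_t,c_t)) + W^1_{t+1}(c_t, Z_{t+1})\right] \notag\\
&\qquad \ge \mathbb{E}^{\mu_t(\cdot|\bar m^1_t,\bar p^1_t,c_t)}\!\left[r_t^1(X_t, h_t^0(P_t^0, c_t), g^1_t(\bar m^1_t,\bar p^1_t,c_t)) + W^1_{t+1}(c_t, Z_{t+1})\right], \notag
\end{align}
where on the right the $Z_{t+1}$ uses action $g^1_t(\bar m^1_t,\bar p^1_t,c_t)$ in place of $u$ as in \eqref{eq:z_eqn1}. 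Since $g^1_t(\bar m^1_t,\bar p^1_t,c_t)$ can be any element of $\mathcal{U}^1_t$, this is exactly \eqref{eq:suff1} at time $t$ for the triple $(\bar m_t^1,\bar p_t^1,c_t)$, completing the induction. I would also observe that this reuses most of the machinery already built for Lemma \ref{lem:suff1} — in particular the identification of $W^1_t(c_t)$ with the left side of \eqref{eq:cisr} and the decomposition of the $\eta_t$-expectation via $\mu_t$ — so much of the algebra can be cited rather than repeated.

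The main obstacle I anticipate is the bookkeeping needed to justify that deviating at a single triple $(\bar m_t^1,\bar p_t^1,c_t)$ genuinely produces an admissible strategy $g^1\in\mathcal{G}^1$ and that, under such a deviation, the reward-to-go from $t+1$ onward is unaffected (it still equals $W^1_{t+1}$ composed with the realized increment). The first point is immediate from the structure \eqref{eq:agent_g}: the agent's action may depend arbitrarily on $(M_t^1,P_t^1,C_t)$. The second point requires care because changing $U_t^1$ at time $t$ changes the distribution of $Z_{t+1}$ and hence of $C_{t+1}$; however, since $g^1$ agrees with $h^1$ for all $k>t$ and the continuation values $W^1_{t+1}$ are defined as functions of $c_{t+1}$ alone (using the strategy-independence of beliefs, Assumption \ref{assm:2}, and the inductive hypothesis via Lemma \ref{lem:suff1}), the continuation reward is correctly captured by $W^1_{t+1}(c_t,Z_{t+1})$ regardless of which action was played at time $t$. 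Once these two points are pinned down cleanly, the cancellation argument and the passage to the pointwise inequality are routine.
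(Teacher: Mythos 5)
Your proposal is correct and follows essentially the same route as the paper's proof: a single-point deviation of $g^1$ from $h^1$ at one triple $(\bar m_t^1,\bar p_t^1,c_t)$ with $\eta_t(\bar p_t^1,\bar m_t^1|c_t)>0$, the decomposition of the $CISR$ inequality over $(p_t^1,m_t^1)$ via $\eta_t$ and $\mu_t$ so that all non-deviating terms cancel, and the identification of $W^1_{t+1}$ with the reward-to-go under $h^1$ to handle the continuation. The only (immaterial) difference is packaging — you run a direct backward induction where the paper argues by contradiction via the largest time $\tau$ at which \eqref{eq:suff1} fails — and in fact, since the identity $W^1_{t+1}(c_{t+1})=$ LHS of \eqref{eq:cisr} holds independently of \eqref{eq:suff1}, neither your inductive hypothesis nor the paper's maximality of $\tau$ is strictly needed for the step at time $t$.
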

\begin{proof}
    See Appendix \ref{appendix: proofofnecelemma}.
\end{proof}

Lemmas \ref{lem:suff1} and \ref{lem:nece1} provide an alternative characterization of  $h^1$  satisfying $CISR(g^m, h^0)$. We now show that this alternative characterization can be expressed as inequalities that are linear in $\eta_t$ (which itself is linear in $g^m_t$, see Definition \ref{def:eta private information of agents}). To do so, we first note that  \eqref{eq:suff1}  can be written as the following collection of inequalities:
\begin{align}
&\mathbb{E}^{\mu_t(\cdot|m^1_t,p^1_t,c_t)}[r_t^1(X_t, h_t^0(P_t^0, c_t), u_t^1)+ W^1_{t+1}(c_t, Z_{t+1})]  \geq \mathbb{E}^{\mu_t(\cdot|m^1_t,p^1_t,c_t)}[r_t^1(X_t, h_t^0(P_t^0, c_t), u)+ W^1_{t+1}(c_t, \Tilde{Z}_{t+1})] \quad \forall {u}\in \mathcal{U}^1_t, \label{eq:lp1}
    \end{align}
     where $u_t^1 = h_t^1(m_t^1, p_t^1, c_t)$,  $Z_{t+1}$ on the left hand side of the inequality above  is given by \eqref{eq:z_eqn1} with $u = u^1_t$,  while  $\Tilde{Z}_{t+1}$ (on the right hand side of the inequality) is given by \eqref{eq:z_eqn1}.
    
    Consider the right hand side of \eqref{eq:lp1}. We can evaluate this expectation as follows:
    \begin{align}
&\sum_{\substack{x\in\mathcal{X}_t,\\p^0\in\mathcal{P}_t^0, n\in \mathcal{N}_t}}
    \mu_t(x,p^0,n |m^1_t,p^1_t, c_t)\Big[r_t^1(x,u_t^0, u)+ W^1_{t+1}(c_t,\tilde{z}_{t+1})\Big], \notag \\
& =\sum_{\substack{x\in\mathcal{X}_t,\\p^0\in\mathcal{P}_t^0, n\in \mathcal{N}_t}}
    \frac{\eta_t(x,p^0,p_t^1, m_t^1, n | c_t)}{\eta_t(p_t^1, m_t^1 | c_t)}\Big[r_t^1(x,u_t^0, u)+ W^1_{t+1}(c_t,\tilde{z}_{t+1})\Big], \label{eq:lp2}
    \end{align}
    where $u^0_t=h^0_t(p^0,c_t)$, $\tilde{z}_{t+1} = \zeta_{t+1}(x, p^0, p^1_t, u_t^0, u, n)$ and we have used the definition of $\mu_t$ from \eqref{eq:eta_conditioned}. (Recall that $\eta_t(p_t^1, m_t^1 | c_t) >0$ in \eqref{eq:suff1}). Writing a similar expression for the left hand side of \eqref{eq:lp1} and canceling $\eta_t(p_t^1, m_t^1 | c_t)$ results in the following set of inequalities that are linear in $\eta_t$:
\begin{align}
&\sum_{\substack{x\in\mathcal{X}_t,\\p^0\in\mathcal{P}_t^0, n\in \mathcal{N}_t}}
\eta_t(x,p^0, p^1_t, m_t^1, n | c_t)\Big[r_t^1(x,u_t^0, u_t^1)+ W_{t+1}^1(c_t,{z}_{t+1})\Big] \notag \\
& \geq \sum_{\substack{x\in\mathcal{X}_t,\\p^0\in\mathcal{P}_t^0, n\in \mathcal{N}_t}}
\eta_t(x,p^0, p^1_t, m_t^1, n | c_t)\Big[r_t^1(x,u_t^0, u)+ W_{t+1}^1(c_t,\tilde{z}_{t+1})\Big],~ \forall {u}\in \mathcal{U}_t^1, \label{eq:lp3}
    \end{align}
where $u^0_t=h^0_t(p^0,c_t)$, $u_t^1 = h_t^1(m_t^1, p_t^1, c_t)$,  $z_{t+1} = \zeta_{t+1}(x, p^0, p^1_t, u_t^0, u_t^1, n)$, $\tilde{z}_{t+1} = \zeta_{t+1}(x, p^0, p^1_t, u_t^0, u, n)$.

Thus, the condition in Lemmas \ref{lem:suff1} and \ref{lem:nece1} can be stated as follows: for all $p_t^1, m_t^1, c_t$ for which $\eta_t(p_t^1, m_t^1| c_t) > 0$ the inequalities in \eqref{eq:lp3} hold. Further, if $\eta_t(p_t^1, m_t^1 | c_t) = 0,$  then it follows that $\eta_t(x,p^0, p^1_t, m_t^1, n | c_t) =0$ for all $x,p^0,n,$ and hence  \eqref{eq:lp3} is trivially true since  both sides of the inequality are $0$. 
We can summarize the above discussion in the following theorem.

\begin{theorem}\label{thm:necce and suff}
   Consider an arbitrary  designer messaging strategy $g^m$. Define $\eta_t$ as in Definition \ref{def:eta private information of agents} and the common information based value functions $W^1_{T+1}, \ldots, W^1_1$ using \eqref{eq:W_defa} and \eqref{eq:agent value function time any t}. Then,  $h^1$ satisfies $CISR(g^m, h^0)$ if and only if the inequalities in  \eqref{eq:lp3} hold for  $t=1,2,\ldots T,$ and for all  $c_t \in \mathcal{C}_t, m_t^1 \in \mathcal{M}_t^1,p_t^1 \in \mathcal{P}_t^1 $. 
\end{theorem}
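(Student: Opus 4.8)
The plan is to obtain Theorem \ref{thm:necce and suff} as an essentially immediate consequence of Lemmas \ref{lem:suff1} and \ref{lem:nece1} combined with the algebraic reduction already carried out in the paragraphs preceding the theorem statement. First I would invoke the two lemmas together: Lemma \ref{lem:suff1} shows that if \eqref{eq:suff1} holds for every $t$ and every triple $(p_t^1,m_t^1,c_t)$ with $\eta_t(p_t^1,m_t^1|c_t)>0$, then $h^1$ satisfies $CISR(g^m,h^0)$, and Lemma \ref{lem:nece1} gives the converse. Hence the statement ``$h^1$ satisfies $CISR(g^m,h^0)$'' is \emph{equivalent} to ``\eqref{eq:suff1} holds for all $t$ and all such triples.'' The value functions $W^1_{T+1},\ldots,W^1_1$ appearing in \eqref{eq:suff1} are exactly those built from \eqref{eq:W_defa}--\eqref{eq:agent value function time any t} for the given $g^m$, so this reformulation is in terms of the quantities named in the theorem.

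Next I would unwind the $\argmax$ in \eqref{eq:suff1} into the family of pairwise inequalities \eqref{eq:lp1}: writing $u_t^1=h_t^1(m_t^1,p_t^1,c_t)$, membership of $u_t^1$ in the $\argmax$ over $u\in\mathcal{U}_t^1$ of the map $u\mapsto \mathbb{E}^{\mu_t(\cdot|m^1_t,p^1_t,c_t)}[r_t^1(X_t,h_t^0(P_t^0,c_t),u)+W^1_{t+1}(c_t,\tilde Z_{t+1})]$ holds if and only if the value at $u_t^1$ dominates the value at every competitor $u$, which is precisely \eqref{eq:lp1}. This is a genuine two-way equivalence. I would then expand both expectations in \eqref{eq:lp1} using the definition \eqref{eq:eta_conditioned} of $\mu_t$ — giving \eqref{eq:lp2} for the right side and the analogous expression for the left side — and multiply the resulting inequality through by the strictly positive scalar $\eta_t(p_t^1,m_t^1|c_t)$. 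Since multiplication (resp.\ division) of an inequality by a positive number is reversible, for every triple with $\eta_t(p_t^1,m_t^1|c_t)>0$ the inequalities \eqref{eq:lp1} and \eqref{eq:lp3} are equivalent, hence \eqref{eq:suff1} $\Leftrightarrow$ \eqref{eq:lp3} for such triples.

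Finally I would absorb the degenerate case. If $\eta_t(p_t^1,m_t^1|c_t)=0$ then, since by \eqref{eq:beliefs2 private information of agents} each summand $\eta_t(x,p^0,p_t^1,m_t^1,n|c_t)$ is nonnegative and these sum to $\eta_t(p_t^1,m_t^1|c_t)=0$, every such summand vanishes; therefore both sides of \eqref{eq:lp3} equal $0$ and the inequality holds automatically. Consequently ``\eqref{eq:lp3} holds for all $t$ and all $c_t,m_t^1,p_t^1$'' is equivalent to ``\eqref{eq:lp3} holds for all $t$ and all $c_t,m_t^1,p_t^1$ with $\eta_t(p_t^1,m_t^1|c_t)>0$,'' which by the previous step equals ``\eqref{eq:suff1} holds for all such triples,'' which in turn equals ``$h^1$ satisfies $CISR(g^m,h^0)$.'' Chaining these equivalences proves the theorem.

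I do not expect any serious obstacle: all the substantive content (the backward-inductive value-function argument and the necessity direction) is already packaged in Lemmas \ref{lem:suff1} and \ref{lem:nece1}. The only point requiring a little care is making sure the passage between \eqref{eq:lp1} and \eqref{eq:lp3} is recorded as a true equivalence rather than a one-sided implication — this is exactly where the strict positivity $\eta_t(p_t^1,m_t^1|c_t)>0$ is used, since it is division by this scalar that lets one go back from \eqref{eq:lp3} to \eqref{eq:lp1} — and in correctly noting that the zero-probability triples contribute only trivially true inequalities so that quantifying over \emph{all} triples is harmless. Everything else is bookkeeping.
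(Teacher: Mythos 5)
Your proposal is correct and follows exactly the paper's own argument: combine Lemmas \ref{lem:suff1} and \ref{lem:nece1} to reduce $CISR(g^m,h^0)$ to \eqref{eq:suff1}, use the positivity of $\eta_t(p_t^1,m_t^1|c_t)$ to pass reversibly between \eqref{eq:suff1} and the linear inequalities \eqref{eq:lp3}, and note that the zero-probability triples make \eqref{eq:lp3} trivially true. No gaps; this matches the paper's proof of Theorem \ref{thm:necce and suff}.
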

\begin{proof}
   From Lemma \ref{lem:suff1} and Lemma \ref{lem:nece1}, we know that $h^1$ satisfies $CISR(g^m, h^0)$ if and only if   \eqref{eq:suff1} holds for all  $p_t^1, m_t^1, c_t$ for which $\eta_t(p_t^1, m_t^1| c_t) > 0$.  
  As discussed above, when $\eta_t(p_t^1, m_t^1| c_t) > 0$,  \eqref{eq:suff1} is equivalent to the  inequalities in \eqref{eq:lp3}.  Moreover, \eqref{eq:lp3} is trivially true if $\eta_t(p_t^1, m_t^1| c_t) =0$.  This proves the theorem.   \end{proof}

\subsubsection{Decomposition of Problem \ref{prob: backup} into Nested Linear Programs}\label{sec:decomposition}
Based on Theorem \ref{thm:necce and suff}, Problem \ref{prob: backup} can be viewed as follows: The designer would like to find a messaging strategy $g^m$,  the associated $\eta_t$ (as per Definition \ref{def:eta private information of agents}), and the common information based value functions $W^1_{T+1}, \ldots, W^1_1$ (defined in \eqref{eq:W_defa} and \eqref{eq:agent value function time any t}) such that the inequalities in \eqref{eq:lp3} are satisfied while maximizing the designer's total expected reward. In other words, we have the following reformulation of Problem \ref{prob: backup}:
\begin{align*}
   &\textbf{Global Problem:} \max_{g^m,\eta_{1:T},W^1_{1:T}} J^0(g^m, h^0, h^1) \\
    &\text{s.t.} \text{~for $t=T,T-1,\ldots,1,$} 
    \\
    &\hspace{25pt}\text{$\eqref{eq:beliefs2 private information of agents}$ holds  for all $x_t, p_t^{0,1}, m_t^1, n_t,c_t$}, \\
    &\hspace{25pt}\text{the inequalities in \eqref{eq:lp3} hold for all $m_t^1, p_t^1, c_t$}\\
    &\hspace{25pt}W_t^1(c_t) \text{ satisfies \eqref{eq:agent value function time any t} for all $c_t$ (with ~$W_{T+1}^1 (\cdot) =0$)}.
\end{align*}

We refer to the above formulation as the \emph{Global Problem} since its objective and  constraints span the entire time horizon. This problem can be computationally difficult because of the large number of optimization variables and constraints, and because some of  the constraints are non-linear as they involve products of the optimization variables (e.g. \eqref{eq:lp3} involves the product of $\eta_t(\cdot|c_t)$ and $W_{t+1}^1(\cdot)$). Our goal in this section is to construct a sequential decomposition of the \emph{Global Problem} into smaller optimization problems.

Note that the constraints in the Global Problem have a backward inductive nature in terms of the value functions $W_T^1,\ldots, W^1_1$. 
We will, therefore, try to decompose the  objective of the Global Problem   in a backward-inductive manner as well.
 To achieve this, we define the following common information based value functions for the designer (that are analogous to the agent value functions $W_t^1$ defined earlier):
 \begin{equation}
    V_{T+1}(c_{T+1}) := 0, \label{eq:V_defa}
\end{equation}
and for $t \le T,$
\begin{align}
    V_t(c_t) &:= \mathbb{E}^{\eta_t}[r_t^0(X_t, h_t^0(P_t^0, c_t), h_t^1(M_t^1, P_t^1, c_t))+V_{t+1}(c_t, Z_{t+1})|C_t=c_t], \label{eq:V_def designer}
\end{align}
where $Z_{t+1}$ in \eqref{eq:V_def designer} is the common information increment at time $t+1$ defined  according to (\ref{equ: info increment}) with control actions  $U_t^0 = h_t^0(P^0_t, c_t), U_t^1 = h_t^1(M_t^1, P^1_t, c_t),$ and the expectation in \eqref{eq:V_def designer} is with respect to the probability distribution $\eta_t(\cdot|c_t)$ defined in Definition~\ref{def:eta private information of agents}.
More explicitly, $V_t(c_t)$ can be written as the following expression:
\begin{align}\label{eq:designer value function time any t}
    V_t(c_t) &= \sum_{\substack{x\in\mathcal{X}_t, p^0\in\mathcal{P}_t^0\\p^1\in\mathcal{P}_t^1,  m^1 \in \mathcal{M}_t^1, n\in \mathcal{N}_t }}
\eta_t(x,p^0, p^1, m^1, n| c_t) \left[r_t^0(x,u_t^0, u_t^1)+ V_{t+1}(c_t, z_{t+1})\right],
\end{align}
where $u^0_t=h^0_t(p^0,c_t)$, $u_t^1 = h_t^1(m^1, p^1, c_t)$ and  $z_{t+1} = \zeta_{t+1}(x, p^{0, 1}, u_t^{0,1}, n)$. 

We now construct a backward inductive sequence of optimization problems using the functions $V_{T+1},\ldots, V_1$. We start at time $T$. For  a realization $c_T$ of $C_T$, we formulate the following optimization problem:
\begin{align*}\label{equ: nested problem objective}
&\mathbf{LP_T}(c_T):\hspace{10pt}\max_{\eta_T(\cdot|c_T),g_T^m(\cdot|\cdot,c_T), V_T(c_T), W_T^1(c_T)} V_T(c_T)\notag \\
&\text{s.t.} \text{~for $t=T,$} 
    \\
    &\hspace{25pt}\text{$\eqref{eq:beliefs2 private information of agents}$ holds for all $x_T, p_T^{0,1}, m_T^1, n_T$,}\notag \\
&\hspace{25pt}\text{the inequalities in \eqref{eq:lp3} hold  for all $m_T^1, p_T^1,$} \notag \\
    &\hspace{25pt}W_T^1(c_T) \text{ satisfies }\eqref{eq:agent value function time any t}\text{~(with $W^1_{T+1}(\cdot) =0$),} \notag\\
    &\hspace{25pt}V_T(c_T) \text{ satisfies }\eqref{eq:designer value function time any t} \text{~(with $V_{T+1}(\cdot) =0$)}.
\end{align*}
We note that the objective and constraints of the above optimization problem are linear in its variables $\eta_T(\cdot|c_T),g_T^m(\cdot|\cdot,c_T), $ $V_T(c_T), W^1_T(c_T)$. We refer to this linear program as $\mathbf{LP_T}(c_T)$. 

Now suppose that the functions $V_T(\cdot)$ and $W^1_T(\cdot)$ have been obtained by solving the family of linear programs $\mathbf{LP_T}(c_T)$ for each  $c_T \in \mathcal{C}_T$. We can now consider a realization $c_{T-1}$ of $C_{T-1}$ and  use the functions $V_T(\cdot)$ and $W^1_T(\cdot)$ to formulate a linear program at time $T-1$ which we refer to as $\mathbf{LP_{T-1}}(c_{T-1})$:
\begin{align*}
&\mathbf{LP_{T-1}}(c_{T-1}):\hspace{10pt}\max_{\substack{\eta_{T-1}(\cdot|c_{T-1}),g_{T-1}^m(\cdot|\cdot,c_{T-1}),\\ V_{T-1}(c_{T-1}), W^1_{T-1}(c_{T-1})}} V_{T-1}(c_{T-1})\notag \\
&\text{s.t.} \text{~for $t=T-1,$} \\
    &\hspace{25pt}\text{$\eqref{eq:beliefs2 private information of agents}$ holds for all $x_{T-1}, p_{T-1}^{0,1}, m_{T-1}^1, n_{T-1}$,}\notag \\
&\hspace{25pt}\text{the inequalities in \eqref{eq:lp3} hold for all $m_{T-1}^1, p_{T-1}^1$,} \notag \\
    &\hspace{25pt}W^1_{T-1}(c_{T-1}) \text{ satisfies }\eqref{eq:agent value function time any t}, \notag\\
    &\hspace{25pt}V_{T-1}(c_{T-1}) \text{ satisfies }\eqref{eq:designer value function time any t}.
\end{align*}

We can now obtain functions $V_{T-1}(\cdot)$ and $W^1_{T-1}(\cdot)$ by solving the family of linear programs $\mathbf{LP_{T-1}}(c_{T-1})$ for each  $c_{T-1} \in \mathcal{C}_{T-1}$. The above procedure can now be repeated backward inductively  for $t=T-2, \ldots, 2,1$. This backward inductive procedure is summarized in Algorithm 1. Note that for each $t$ and each $c_t$, the linear program $\mathbf{LP_t}(c_t)$ in Algorithm 1 finds (among other things) a messaging strategy $g^m_t(\cdot|\cdot,c_t)$.
\begin{algorithm}[h]
\caption{}\label{alg:1}
\begin{algorithmic}
    \State{$W^1_{T+1}(\cdot) = V_{T+1}(\cdot) =0$}
    \For{$t=T,T-1,\ldots,2,1$}

        \For{ each $c_t \in \mathcal{C}_t$} 
            \State{
            $\eta_t(\cdot|c_t),g_t^m(\cdot|\cdot,c_t), V_t(c_t), W_t(c_t) =$ ~\\
            \par
            \hspace{35pt}Solution of the  linear program $\mathbf{LP_t}(c_t)$  \\
            ~\\
            \par
                \hspace{30pt}$\mathbf{LP_t}(c_t):\max_{\substack{\eta_t(\cdot|c_t),g_t^m(\cdot|\cdot,c_t),V_t(c_t), W^1_t(c_t)}} V_t(c_t)$ \par
                \vspace{10pt}
                \hspace{20pt}s.t.\:\eqref{eq:beliefs2 private information of agents} holds for all $x_t, p_t^{0,1}, m_t^1, n_t$,
                \par
                \hspace{30pt}the inequalities in \eqref{eq:lp3} hold $\forall \:(m_{t}^1, p_{t}^1)$,
                \par
                \hspace{30pt}$W^1_{t}(c_{t})$ satisfies \eqref{eq:agent value function time any t},
                \par
                \hspace{30pt}$V_t(c_t)$ satisfies \eqref{eq:designer value function time any t}}.
            \EndFor
            \State{$g^m_t = \{g_t^m(\cdot|\cdot,c_t)\}_{c_t \in \mathcal{C}_t}$}
    \EndFor
     \State{\Return{$g^m = (g_1^m, \ldots, g_T^m)$}}
\end{algorithmic}
\end{algorithm}

\begin{theorem}\label{thm:alg}
   The designer messaging strategy $g^m$ returned by Algorithm \ref{alg:1} is an optimal solution for Problem \ref{prob: backup}.
\end{theorem}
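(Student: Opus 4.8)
The plan is to prove the theorem in three stages. (1) Identify Problem~\ref{prob: backup} with the \emph{Global Problem} of Section~\ref{sec:decomposition} and rewrite the designer's objective through common-information value functions. (2) Check that the messaging strategy returned by Algorithm~\ref{alg:1} is feasible for Problem~\ref{prob: backup}. (3) Prove its optimality by a backward induction that mirrors the nesting of the linear programs $\mathbf{LP_t}(c_t)$. Throughout I assume Problem~\ref{prob: backup} is feasible (equivalently, that the $\mathbf{LP_t}(c_t)$ are feasible), so that Algorithm~\ref{alg:1} does return a strategy.

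\emph{Stage 1.} By Theorem~\ref{thm:necce and suff}, for a fixed $g^m$ the requirement ``$h^1$ satisfies $CISR(g^m,h^0)$'' is equivalent to the inequalities \eqref{eq:lp3} holding at every $(t,c_t,m_t^1,p_t^1)$, where $\eta_t$ is as in Definition~\ref{def:eta private information of agents} and the $W^1_t$ are given by \eqref{eq:W_defa} and \eqref{eq:agent value function time any t}. The remaining constraints of the Global Problem merely tie $\eta_t$ to $g^m_t$ and $W^1_t$ to $(\eta_t,W^1_{t+1})$, so a messaging strategy is feasible for Problem~\ref{prob: backup} if and only if the tuple $(g^m,\eta_{1:T},W^1_{1:T})$ it induces is feasible for the Global Problem, with the same objective value; hence the two problems are equivalent. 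By the same backward-inductive argument that identifies $W^1_t(c_t)$ with the agent's reward-to-go (promised in the proof of Lemma~\ref{lem:suff1}), the designer value functions $V_t$ of \eqref{eq:V_defa}--\eqref{eq:designer value function time any t} satisfy $V_t(c_t)=\mathbb{E}^{(g^m,h^0,h^1)}\!\big[\sum_{k=t}^{T}r_k^0\mid c_t\big]$, so $J^0(g^m,h^0,h^1)=\sum_{c_1}\mathbb{P}(C_1=c_1)V_1(c_1)$, where $\mathbb{P}(C_1=c_1)$ is fixed by $P_{X_1},Q_1,\Lambda$ and does not depend on $g^m$. Thus the Global Problem is to maximize $\sum_{c_1}\mathbb{P}(c_1)V_1(c_1)$.

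\emph{Stage 2.} Let $g^{m,\ast}$ and the functions $\{W^1_t,V_t\}_t$ be produced by Algorithm~\ref{alg:1}, and let $V^\ast_t(c_t)$ denote the optimal value of $\mathbf{LP_t}(c_t)$. A downward induction on $t$ (trivial at $t=T+1$) shows that the $W^1_t,V_t$ carried by the algorithm coincide with the value functions induced by $g^{m,\ast}$ through \eqref{eq:agent value function time any t} and \eqref{eq:designer value function time any t}: at step $t$, $\mathbf{LP_t}(c_t)$ forces $W^1_t(c_t),V_t(c_t)$ to obey those recursions with the algorithm's $W^1_{t+1},V_{t+1}$ (equal, by the inductive hypothesis, to the ones induced by $g^{m,\ast}_{t+1:T}$) and with $\eta_t(\cdot|c_t)$ induced by $g^{m,\ast}_t$. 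Since each $\mathbf{LP_t}(c_t)$ enforces \eqref{eq:lp3} at $(t,c_t)$ with those value functions, $g^{m,\ast}$ satisfies every inequality \eqref{eq:lp3}, hence by Theorem~\ref{thm:necce and suff} it is feasible for Problem~\ref{prob: backup}, with $J^0(g^{m,\ast},h^0,h^1)=\sum_{c_1}\mathbb{P}(c_1)V^\ast_1(c_1)$.

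\emph{Stage 3, and the main obstacle.} It remains to show $J^0(g^m,h^0,h^1)\le\sum_{c_1}\mathbb{P}(c_1)V^\ast_1(c_1)$ for every $g^m$ feasible for Problem~\ref{prob: backup}; I would obtain this from the pointwise bound $V^{g^m}_t(c_t)\le V^\ast_t(c_t)$ for all $t,c_t$, proved by downward induction. The inductive step would combine the inductive hypothesis $V^{g^m}_{t+1}(\cdot)\le V^\ast_{t+1}(\cdot)$ with the monotonicity of recursion \eqref{eq:designer value function time any t} in its continuation term, giving $V^{g^m}_t(c_t)\le\mathbb{E}^{\eta^{g^m}_t(\cdot|c_t)}[r_t^0(\cdot)+V^\ast_{t+1}(c_t,Z_{t+1})]$, and then recognize the right-hand side as the objective of $\mathbf{LP_t}(c_t)$ evaluated at the point $(\eta^{g^m}_t(\cdot|c_t),g^m_t(\cdot|\cdot,c_t),\cdot,\cdot)$, which is $\le V^\ast_t(c_t)$ \emph{provided that point is feasible for} $\mathbf{LP_t}(c_t)$. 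This feasibility check is the crux and the step I expect to be hardest: $\mathbf{LP_t}(c_t)$ writes the constraint \eqref{eq:lp3} at $(t,c_t)$ using the \emph{frozen} agent continuation value function $W^1_{t+1}$ computed at stage $t+1$ (the one induced by $g^{m,\ast}$), whereas feasibility of $g^m$ for the Global Problem only guarantees \eqref{eq:lp3} at $(t,c_t)$ with the continuation value function induced by $g^m_{t+1:T}$. One must therefore argue that this freezing is without loss --- e.g.\ that under the $CISR$ constraints the agent's common-information value function over the part of the tree below $c_t$ is effectively pinned down, so it matches the algorithm's; or, absent an exact coincidence, that the feasible set and optimal value of $\mathbf{LP_t}(c_t)$ are insensitive to which $CISR$-compatible continuation value function is substituted. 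I would address this with a refined induction that propagates the designer's and the agent's common-information value functions together. Everything else reduces to routine manipulation of the linear recursions.
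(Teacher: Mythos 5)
Your proposal follows the paper's proof (Appendix \ref{appendix: proofofalgorithm}) step for step: the same reduction to the Global Problem, the same feasibility check for the algorithm's output, and the same backward induction establishing $V^{global}_t(c_t)\le V_t(c_t)$ by inserting the stage-$t$ slice of an arbitrary feasible global solution into $\mathbf{LP_t}(c_t)$. Stages 1 and 2 are fine. But the proposal is not yet a proof, because Stage 3 stops at precisely the step that carries the entire argument. The point $\bigl(\eta^{global}_t(\cdot|c_t), g^{m,global}_t(\cdot|\cdot,c_t), W^{1,global}_t(c_t), \hat V_t(c_t)\bigr)$ must be shown feasible for $\mathbf{LP_t}(c_t)$, whose constraints --- the incentive inequalities \eqref{eq:lp3} and the recursion \eqref{eq:agent value function time any t} --- are written with the \emph{algorithm's} continuation function $W^1_{t+1}$, computed at stage $t+1$ from the designer-optimal continuation. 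Feasibility of $g^{m,global}$ for the Global Problem only guarantees these relations with the continuation $W^{1,global}_{t+1}$ induced by $g^{m,global}_{t+1:T}$. Because the deviation increment $\tilde z_{t+1}$ differs from $z_{t+1}$, the continuation terms in \eqref{eq:lp3} do not cancel out, and because the agent's reward-to-go is not pinned down by the CISR constraints (different CISR-compatible messaging strategies from $t+1$ onward generally give the agent different expected rewards), the two continuation functions need not coincide. So neither of the escape routes you sketch --- ``the continuation value is effectively pinned down'' or ``the LP is insensitive to which CISR-compatible continuation is substituted'' --- is established in the proposal, and neither is obviously true.

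For what it is worth, the paper's own proof passes over this exact point with ``it is now easy to check that $(\eta_l^{global}(\cdot|c_l), g^{m,global}_l(\cdot|\cdot,c_l), W^{1,global}_l(c_l), \hat{V}_l(c_l))$ is a feasible solution for $\mathbf{LP_l}(c_l)$,'' so you have correctly located the load-bearing step rather than invented a spurious difficulty. But locating it is not discharging it: to complete the argument one must either prove that $W^{1}_{t+1}$ and $W^{1,global}_{t+1}$ agree on the subtree below $c_t$ (which does not follow from Assumptions \ref{assm:1}--\ref{assm:2}), or carry out the ``refined induction'' you allude to and show that replacing $W^{1,global}_{t+1}$ by the algorithm's $W^1_{t+1}$ preserves both the inequalities \eqref{eq:lp3} and the existence of an admissible $W^1_t(c_t)$. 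As submitted, the proposal leaves the theorem unproved at its decisive step.
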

\begin{proof}
    See Appendix \ref{appendix: proofofalgorithm}.
\end{proof}
\begin{remark}
    If any of the linear programs involved in Algorithm \ref{alg:1} are infeasible, then the algorithm fails to find a $g^m$ and Problem \ref{prob: backup} does not have a solution.
\end{remark}

\subsection{Joint Optimization over Designer's Messaging and Action Strategies}\label{sec:action opt}
In this section, we consider the same basic model as in Section \ref{sec:model} but we now allow the designer to jointly optimize over its messaging and action strategies (instead of using a fixed action strategy $h^0$ as in Problem \ref{prob: backup}). In this new setting, the designer operates as follows:  at each time $t$, the designer generates a \emph{message-action pair} $(M_t^1, U_t^0) \in \mathcal{M}_t^1 \times \mathcal{U}_t^0$. This pair is generated randomly according to a probability distribution $D^d_t$ on $\mathcal{M}_t^1 \times \mathcal{U}_t^0$ . The distribution $D_t^d$ is selected by the designer as a function of its information at time $t$, i.e.,
\begin{equation}\label{eq:new gd}
  (M^1_t,U^0_t) \sim D_t^d,  \quad \text{and} \quad  D_t^d = g_t^d(P_t^0, C_t),
\end{equation}
where $g_t^d$ is now referred to as the \emph{designer's strategy at time $t$}. We call the collection $g^d:=(g^d_1,g^d_2,\ldots,g^d_T)$ the designer's strategy. Let $\mathcal{G}^d$ denote the set of all possible strategies for the designer. As in Section \ref{sec:model}, we will use $g^d_t(m^1_t, u_t^0|p^0_t,c_t)$ to indicate
the \emph{probability} of generating the message-action pair $m^1_t, u^0_t$ when the   designer is using the strategy $g^d_t$ at time $t$ and the realizations of its private and common information are $p^0_t,c_t$ respectively. The agent operates in the same manner as in Section \ref{sec:model}. At time $t$, after the agent receives the message $M^1_t$ from the designer, it generates an action as a function of its information and the message, i.e.,
\begin{equation}\label{eq:agent_g2}
    U_t^1 = g_t^1(M_t^1, P_t^1, C_t).
\end{equation}
 The strategy pair for the designer and the agent, $g:=(g^d, g^1)$, is called the \emph{strategy profile}. The system dynamics, the information structure and the rewards for the designer and the agent are the same as in Section \ref{sec:model}.
In particular, the total expected reward for the designer  under the strategy profile $g = (g^d, g^1)$ is given as: 
\begin{equation}\label{eq: reward-to-go des action not fixed}
    J^0(g^d, g^1) := \mathbb{E}^{g}\left[\sum_{t=1}^T r_t^0(X_t, U_t^0, U_t^1)\right].
\end{equation}

 The designer would like to incentivize the agent to use a specific strategy $h^1$. The following definition of common information based sequential rationality (CISR) is similar to Definition \ref{def: sequential rationality}.
\begin{definition}\label{def: sequential rationality des action unfixed}
    We say that agent strategy $h^1$ satisfies common information based sequential rationality (CISR) with respect to the designer strategy $g^d$ if the following is true:\\
    For each time $t$ and each possible realization $c_t$ of common information at time $t$,
    \begin{align}
    &\mathbb{E}^{(g^d, h^1)_{t:T}}\left[\sum_{k=t}^T r_k^1(X_k, U_k^0, U_k^1) \Bigm| c_t\right]\geq \mathbb{E}^{(g^d, g^1)_{t:T}}\left[\sum_{k=t}^T r_k^1(X_k, U_k^0, U_k^1) \Bigm| c_t\right] \hspace{5pt}\forall g^1 \in  \mathcal{G}^1. \label{eq:cisr des action unfixed}
    \end{align}
\end{definition}
If all the inequalities in the definition above are true, we  say that   ``$h^1$ satisfies $CISR(g^d)$''.
We state the designer's problem below. 

\begin{problem}\label{prob:des action unfixed}
Given a fixed $h^1$, the designer's goal is to find an optimal strategy $g^d$ that maximizes the designer's total expected reward while ensuring  that $h^1$ satisfies common information based sequential rationality with respect to $g^d,$ as per Definition \ref{def: sequential rationality des action unfixed}. That is, 
    the designer would like to solve the following strategy optimization problem:
    \begin{align*}
    &\max_{g^d \in \mathcal{G}^d} J^0(g^d, h^1) \\
    &\hspace{10pt}\text{s.t. $h^1$ satisfies $CISR(g^d)$.}
    \end{align*}
  \end{problem}
\noindent We investigate Problem \ref{prob:des action unfixed}  under  Assumptions \ref{assm:1} and \ref{assm:2} of Section \ref{sec:model}.

\subsubsection{Solution Approach}\label{sec:SA Joint}
Our approach is similar to the one used in Section \ref{sec:SA} with some modifications to account for the new way the designer's action is generated. We first modify our definition of the common information based belief $\eta_t$ as follows.
\begin{definition}\label{def:eta designer action not fixed}
   Given a designer  strategy $g^d$ and a realization $c_t$ of the common information at time $t$, we define the following common information based belief on $X_t, P^{0,1}_t, M_t^1, U_t^0, N_t$: 
    \begin{align}\label{eq:beliefs designer action unfixed}
    &\eta_t(x_t, p^{0,1}_t,m_t^1, u_t^0, n_t
    | c_t) =Q_t(n_t)\pi_t(x_t,p^{0,1}_t | c_t)g^d_t(m_t^1, u_t^0|p^0_t,c_t),
\end{align}
for all $x_t, p^{0,1}_t,m_t^1, u_t^0, n_t$.
\end{definition}

Note that unlike Definition~\ref{def:eta private information of agents}, the above definition of $\eta_t(\cdot|c_t)$ includes an extra argument $u_t^0$  since the designer's action is now generated using $g^d$ according to \eqref{eq:new gd}. 

Consider a fixed designer strategy $g^d$. This $g^d$ induces $\eta_t$ as per Definition \ref{def:eta designer action not fixed}. The common information based value functions for the agent are defined similarly to the definitions in \ref{subsec:constraints}  except that $u^0_t$ is no longer given by $h^0_t(p^0_t,c_t)$. 
Thus, we have $W_{T+1}^1(c_{T+1}):=0$, and for $t \le T$, \eqref{eq:agent value function time any t} is modified to be
\begin{align}\label{eq:agent val func des action unfixed}
    &W_t^1(c_t) := \sum_{\substack{x\in\mathcal{X}_t,p^0\in\mathcal{P}_t^0,p^1\in\mathcal{P}_t^1 \\ m^1 \in \mathcal{M}_t^1,u_t^0 \in \mathcal{U}_t^0, n\in \mathcal{N}_t}}
\eta_t(x,p^0, p^1, m^1, u_t^0, n | c_t) [r_t^1(x,u_t^0, u_t^1)+ W_{t+1}^1(c_t, z_{t+1})]
\end{align}
where $u_t^1 = h_t^1(m^1, p^1, c_t)$ and  $z_{t+1} = \zeta_{t+1}(x, p^{0,1}, u_t^{0,1}, n)$.  As in Section \ref{subsec:constraints}, it can be verified by a backward inductive argument that $W_t^1(c_t)$ defined above is the left hand side of \eqref{eq:cisr des action unfixed} in Definition \ref{def: sequential rationality des action unfixed} for all $c_t$.

The following theorem, which is analogous to Theorem \ref{thm:necce and suff}, provides a necessary and sufficient condition for the requirement that ``$h^1$ satisfies $CISR(g^d)$" in Problem \ref{prob:des action unfixed}.

\begin{theorem}\label{thm:necce and suff joint opt}
   In Problem \ref{prob:des action unfixed}, $h^1$ satisfies $CISR(g^d)$ if and only if the following statement is true: \\
    For  $t=T,T-1,\ldots 1,$ and for each  $c_t \in \mathcal{C}_t, m_t^1 \in \mathcal{M}_t^1,p_t^1 \in \mathcal{P}_t^1 $,
    \begin{align}
&\sum_{\substack{x\in\mathcal{X}_t,p^0\in\mathcal{P}_t^0\\u_t^0 \in \mathcal{U}_t^0, n\in \mathcal{N}_t}}
\eta_t(x,p^0, p^1_t, m_t^1, u_t^0, n | c_t)\Big[r^1_t(x,u_t^0, u_t^1)+ W_{t+1}^1(c_t,{z}_{t+1})\Big] \notag \\
&\geq \sum_{\substack{x\in\mathcal{X}_t,p^0\in\mathcal{P}_t^0\\u_t^0 \in \mathcal{U}_t^0, n\in \mathcal{N}_t}}
\eta_t(x,p^0, p^1_t, m_t^1, u_t^0, n | c_t)\Big[r^1_t(x,u_t^0, u^1))+ W_{t+1}^1(c_t,\tilde{z}_{t+1})\Big],~ \forall {u^1}\in \mathcal{U}_t^1, \label{eq:lin joint opt}
    \end{align}
where $u_t^1 = h_t^1(m_t^1, p_t^1, c_t)$, $z_{t+1} = \zeta_{t+1}(x, p^0, p^1_t, u_t^0, u_t^1, n)$, and $\tilde{z}_{t+1} = \zeta_{t+1}(x, p^0, p^1_t, u_t^0, u, n)$.
\end{theorem}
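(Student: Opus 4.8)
The plan is to mirror the argument that established Theorem~\ref{thm:necce and suff}, adapting each step so that the designer's action $U_t^0$ is treated as a random variable generated through $g^d$ rather than via a deterministic map $h^0_t(\cdot,\cdot)$. For a realization $c_t$ with $\eta_t(p_t^1,m_t^1|c_t)>0$ I would define, in analogy with \eqref{eq:eta_conditioned}, the conditional distribution
\[
\mu_t(x_t,p_t^0,u_t^0,n_t|m_t^1,p_t^1,c_t) := \frac{\eta_t(x_t,p_t^0,p_t^1,m_t^1,u_t^0,n_t|c_t)}{\eta_t(p_t^1,m_t^1|c_t)}
\]
on $X_t,P_t^0,U_t^0,N_t$ — the only change being the extra coordinate $u_t^0$ — and then prove analogs of Lemmas~\ref{lem:suff1} and \ref{lem:nece1}: $h^1$ satisfies $CISR(g^d)$ if and only if, for every $t$ and every $(p_t^1,m_t^1,c_t)$ with $\eta_t(p_t^1,m_t^1|c_t)>0$,
\[
h_t^1(m_t^1,p_t^1,c_t)\in\argmax_{u\in\mathcal{U}_t^1}\ \mathbb{E}^{\mu_t(\cdot|m_t^1,p_t^1,c_t)}\big[r_t^1(X_t,U_t^0,u)+W_{t+1}^1(c_t,Z_{t+1})\big],
\]
with $Z_{t+1}=\zeta_{t+1}(X_t,P_t^0,p_t^1,U_t^0,u,N_t)$ and $W^1$ given by \eqref{eq:agent val func des action unfixed}.

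For sufficiency I would run the backward induction of the proof of Lemma~\ref{lem:suff1}. The base case $W^1_{T+1}=0$ is immediate; for the inductive step one first verifies, exactly as noted after \eqref{eq:agent val func des action unfixed}, that $W_t^1(c_t)$ is the agent's reward-to-go under $(g^d,h^1)_{t:T}$ given $c_t$. Then, for an arbitrary alternative $g^1$, conditioning on $(M_t^1,P_t^1)=(m_t^1,p_t^1)$ and using that the law of $(X_t,P_t^0,U_t^0,N_t)$ given this event and $c_t$ is exactly $\mu_t(\cdot|m_t^1,p_t^1,c_t)$, the argmax property makes the time-$t$-onward reward from playing $h_t^1$ at time $t$ (and $h^1$ afterwards, invoked through $W_{t+1}^1$ by the induction hypothesis) dominate that from playing $g_t^1$; averaging over $(m_t^1,p_t^1)$ with respect to $\eta_t(\cdot|c_t)$ yields \eqref{eq:cisr des action unfixed}. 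For necessity I would argue by contraposition as in the proof of Lemma~\ref{lem:nece1}: if the argmax condition fails at some $(t,p_t^1,m_t^1,c_t)$ with $\eta_t(p_t^1,m_t^1|c_t)>0$, the agent strictly improves its conditional reward-to-go at $c_t$ by switching, at that time and information only, to a maximizing action, contradicting $CISR(g^d)$.

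It remains to convert the argmax characterization into the linear inequalities \eqref{eq:lin joint opt}. When $\eta_t(p_t^1,m_t^1|c_t)>0$, the argmax condition is equivalent to the family of inequalities $\mathbb{E}^{\mu_t}[r_t^1(X_t,U_t^0,u_t^1)+W_{t+1}^1(c_t,Z_{t+1})]\ge \mathbb{E}^{\mu_t}[r_t^1(X_t,U_t^0,u)+W_{t+1}^1(c_t,\tilde{Z}_{t+1})]$ over $u\in\mathcal{U}_t^1$ (with $u_t^1=h_t^1(m_t^1,p_t^1,c_t)$); expanding both expectations via the definition of $\mu_t$ and clearing the common strictly positive denominator $\eta_t(p_t^1,m_t^1|c_t)$ gives exactly \eqref{eq:lin joint opt}. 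When $\eta_t(p_t^1,m_t^1|c_t)=0$, the factorization in Definition~\ref{def:eta designer action not fixed} forces $\eta_t(x,p^0,p_t^1,m_t^1,u_t^0,n|c_t)=0$ for all $x,p^0,u_t^0,n$, so both sides of \eqref{eq:lin joint opt} vanish and it holds trivially; this is why the conclusion is stated over all $(m_t^1,p_t^1,c_t)$.

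The main obstacle I anticipate is the bookkeeping in the sufficiency induction: one must ensure that, given $C_t=c_t$, the conditional law of $(X_t,P_t^{0,1},M_t^1,U_t^0,N_t)$ is indeed $\eta_t(\cdot|c_t)$ of Definition~\ref{def:eta designer action not fixed} under every profile $(g^d,g^1)$ making $c_t$ reachable — this rests on Assumption~\ref{assm:2} together with the fact that $(M_t^1,U_t^0)$ is drawn from $(P_t^0,C_t)$ via $g^d$ independently of the agent's future play — and that the increment $Z_{t+1}$ realized along a deviation matches the $\tilde{Z}_{t+1}$ in \eqref{eq:lin joint opt}. These are the same structural facts underpinning Section~\ref{sec:SA}; the only genuinely new element is carrying the extra randomized coordinate $U_t^0$ through the conditional distributions, which is routine.
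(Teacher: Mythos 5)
Your proposal is correct and follows essentially the same route as the paper: the paper's proof of Theorem \ref{thm:necce and suff joint opt} likewise defines $\mu_t(\cdot|m^1_t,p^1_t,c_t)$ with the extra $u_t^0$ coordinate as in \eqref{eq:eta_conditioned_joint}, invokes arguments parallel to Lemmas \ref{lem:suff1} and \ref{lem:nece1} to obtain the argmax characterization \eqref{eq:suff1_joint}, and then clears the positive denominator (handling the $\eta_t(p_t^1,m_t^1|c_t)=0$ case trivially) to arrive at \eqref{eq:lin joint opt}. The only cosmetic difference is that the paper's necessity argument (in Lemma \ref{lem:nece1}) deviates at the \emph{largest} failing time index, whereas you deviate at an arbitrary failing time; both work since $W^1_{t+1}$ equals the reward-to-go under $h^1$ regardless.
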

\begin{proof}

Using arguments similar to Lemmas \ref{lem:suff1} and \ref{lem:nece1}, we can establish that $h^1$ satisfies $CISR(g^d,h^0)$ if and only if the following is true (analogous to \eqref{eq:suff1}) for all $p_t^1 , m_t^1 , c_t  $ for which $\eta_t(p_t^1, m_t^1| c_t) > 0$:
\begin{align}
            &h_t^1(m_t^1, p_t^1, c_t) \in \argmax_{u^1 \in \mathcal{U}_t^1} \mathbb{E}^{\mu_t(\cdot|m^1_t,p^1_t,c_t)}[r_t^1(X_t, U^0_t, u^1) + W^1_{t+1}(c_t, Z_{t+1})], \label{eq:suff1_joint}
        \end{align}
 where $Z_{t+1} = \zeta_{t+1}(X_t, P_t^{0}, p^1_t, U_t^0 , u^1, N_{t}) $ and   the expectation is with respect to the  distribution $\mu_t(\cdot|m^1_t,p^1_t,c_t)$ on $X_t,P^0_t,$ $U^0_t,N_t$ defined below
\begin{equation}\label{eq:eta_conditioned_joint}
    \mu_t(x_t,p^0_t, u_t^0, n_t|m^1_t,p^1_t,c_t) = \frac{\eta_t(x_t,p^0_t,p_t^1, m_t^1, u_t^0, n_t | c_t)}{\eta_t(p_t^1, m_t^1| c_t)}
\end{equation}
 Then, following steps similar to those used in \eqref{eq:lp2} and \eqref{eq:lp3}, we can show that \eqref{eq:suff1_joint} is equivalent to the collection of inequalities \eqref{eq:lin joint opt} in the statement of Theorem \ref{thm:necce and suff joint opt}.
\end{proof}

To use similar decomposition methods as in Section \ref{sec:decomposition}, we first modify the common information based value functions for the designer as follows: $V_{T+1}(c_{T+1}):=0$, and for $t \le T$,
\begin{align}\label{eq:designer value function joint opt}
    &V_t(c_t) = \sum_{\substack{x\in\mathcal{X}_t, p^0\in\mathcal{P}_t^0,p^1\in\mathcal{P}_t^1\\ m^1 \in \mathcal{M}_t^1, u_t^0 \in \mathcal{U}_t^0, n\in \mathcal{N}_t }}
\eta_t(x,p^0, p^1, m^1, u_t^0, n| c_t)[r^0_t(x,u_t^0, u_t^1)+ V_{t+1}(c_t, z_{t+1})],
\end{align}
 where $u_t^1 = h_t^1(m^1, p^1, c_t)$ and  $z_{t+1} = \zeta_{t+1}(x, p^{0, 1}, u_t^{0,1}, n)$.

We can now state our main result for Problem \ref{prob:des action unfixed}.

\begin{theorem}\label{thm:joint opt}
    Consider a modified Algorithm \ref{alg:1} where for each time $t$ and for each $c_t$, we have $\eta_t(\cdot|c_t),g_t^d(\cdot|\cdot,c_t), V_t(c_t), W_t(c_t) =$ Solution of the  linear program $\mathbf{LP_t}(c_t)$ where $\mathbf{LP_t}(c_t)$ is as follows:
    \begin{align*}\label{equ: nested problem objective at t}
&\mathbf{LP_t}(c_t):\hspace{10pt}\max_{\eta_t(\cdot|c_t),g_t^d(\cdot|\cdot,c_t), V_t(c_t), W_t^1(c_t)} V_t(c_t)\notag \\
&\text{s.t. $\eqref{eq:beliefs designer action unfixed}$ holds for all $x_t, p_t^{0,1}, m_t^1, u_t^0, n_t$,}\notag \\
&\hspace{20pt}\text{the inequalities in \eqref{eq:lin joint opt} hold  for all $m_t^1, p_t^1,$} \notag \\
    &\hspace{20pt}W_t^1(c_t) \text{ satisfies }\eqref{eq:agent val func des action unfixed} \notag\\
    &\hspace{20pt}V_t(c_t) \text{ satisfies }\eqref{eq:designer value function joint opt}.
\end{align*}
Then, the designer strategy $g^d$ returned by Algorithm \ref{alg:1} is an optimal solution for Problem \ref{prob:des action unfixed}. 

\end{theorem}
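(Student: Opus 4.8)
The proof follows the template of the proof of Theorem~\ref{thm:alg}, with only bookkeeping changes to account for $U^0_t$ now being drawn from $g^d_t$ rather than set to $h^0_t(p^0_t,c_t)$. I would first reformulate Problem~\ref{prob:des action unfixed} as a ``Global Problem'' exactly as in Section~\ref{sec:decomposition}: by Theorem~\ref{thm:necce and suff joint opt}, the constraint ``$h^1$ satisfies $CISR(g^d)$'' is equivalent to the linear inequalities \eqref{eq:lin joint opt} holding at every $t,c_t,m^1_t,p^1_t$ (with $W^1_{t+1}$ the induced agent value function of \eqref{eq:agent val func des action unfixed}), so the designer seeks $g^d\in\mathcal G^d$ — equivalently the collection $\{\eta_t(\cdot|c_t)\}_{t,c_t}$ constrained to the product form of Definition~\ref{def:eta designer action not fixed} — maximizing $J^0(g^d,h^1)$ subject to those inequalities. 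Two facts make this decouple over $c_1$: the law of $C_1$ is strategy-independent (it is fixed by $P_{X_1},Q_1,\Lambda$ alone, by Assumption~\ref{assm:1} and the independence of $X_1,N_1$), and $J^0(g^d,h^1)=\sum_{c_1}\mathbb P(C_1=c_1)\,V_1(c_1)$ where $V_1$ is the designer value function \eqref{eq:designer value function joint opt}. Hence it suffices to prove that, for each $c_1$, $\mathbf{LP_1}(c_1)$ in the modified Algorithm~\ref{alg:1} computes the largest value of $V_1(c_1)$ consistent with $CISR(g^d)$.

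The heart of the proof is a backward induction on $t$. For $t$ and a realization $c_t$, let $V^\ast_t(c_t)$ denote the supremum of the induced $V_t(c_t)$ over all partial designer strategies defined on $c_t$ and its descendant common-information realizations that satisfy \eqref{eq:lin joint opt} at times $t,\dots,T$; then $V^\ast_1(c_1)$ is the target quantity. I would show, for $t=T+1$ down to $1$, that $\mathbf{LP_t}(c_t)$ is feasible if and only if such a partial strategy exists, that its optimum equals $V^\ast_t(c_t)$, and that the $g^d_t(\cdot|\cdot,c_t)$ and $W^1_t(c_t)$ it returns extend to an optimal such partial strategy (so that $V_t(\cdot),W^1_t(\cdot)$ returned at stage $t$ are the correct inputs for stage $t-1$). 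The base case $t=T+1$ is trivial since $V_{T+1}\equiv W^1_{T+1}\equiv 0$. For the step, assume the hypothesis at $t+1$. Any partial strategy from $c_t$ satisfying \eqref{eq:lin joint opt} at times $\ge t$ restricts to feasible partial strategies at the children $(c_t,z)$, whose continuation designer-values are $\le V_{t+1}(c_t,z)$ by the hypothesis; substituting into \eqref{eq:designer value function joint opt} and noting that \eqref{eq:lin joint opt} at time $t$ is precisely the constraint of $\mathbf{LP_t}(c_t)$ shows its induced $V_t(c_t)$ is at most the optimum of $\mathbf{LP_t}(c_t)$; conversely, concatenating the optimizer of $\mathbf{LP_t}(c_t)$ with the optimal child partial strategies (which realize the returned $W^1_{t+1},V_{t+1}$ by the hypothesis) yields a feasible partial strategy attaining the LP optimum. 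Combining this with the reduction of the first paragraph at $t=1$ proves the theorem.

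The main obstacle is the converse part of the inductive step — justifying that committing at each child $(c_t,z)$ to the $V_{t+1}$-maximizing solution of $\mathbf{LP_{t+1}}(c_t,z)$ loses nothing at ancestor realizations. The difficulty is that the stage-$t$ constraints \eqref{eq:lin joint opt} depend on the children's agent value functions $W^1_{t+1}(c_t,\cdot)$, whereas the stage-$t$ objective depends on the children's designer value functions $V_{t+1}(c_t,\cdot)$, and both are determined together by the continuation choice; a priori a continuation that is suboptimal for $V_{t+1}$ could induce a $W^1_{t+1}$ for which \emph{more} messaging/action strategies at $c_t$ satisfy \eqref{eq:lin joint opt}. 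This must be handled by the same argument as in the proof of Theorem~\ref{thm:alg}: starting from any solution of the Global Problem and processing common-information realizations in order of decreasing $t$, one replaces the continuation at each realization by a $V$-maximizing one (itself $CISR$-feasible by the hypothesis) and re-optimizes the messaging/action strategy at its parent, arguing that this never decreases $J^0$ and never leaves the parent infeasible — the crux being that \eqref{eq:lin joint opt} only constrains the sequential optimality of the \emph{fixed} strategy $h^1$, so any $CISR$-feasible continuation admits a compatible parent messaging/action strategy. I expect this to be the only non-routine step; the remaining changes from Theorem~\ref{thm:alg} — the extra coordinate $u^0_t$ in $\eta_t(\cdot|c_t)$, and the evaluation of $r^1_t,r^0_t,\zeta_{t+1}$ at the randomized $U^0_t$ rather than at $h^0_t(p^0_t,c_t)$ — leave every argument intact, since \eqref{eq:beliefs designer action unfixed} is still affine in $g^d_t$ and \eqref{eq:lin joint opt}, \eqref{eq:agent val func des action unfixed}, \eqref{eq:designer value function joint opt} are still linear in $\eta_t$, so $\mathbf{LP_t}(c_t)$ remains a genuine linear program.
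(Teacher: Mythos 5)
You reproduce the paper's architecture faithfully: reformulate Problem~\ref{prob:des action unfixed} as a Global Problem via Theorem~\ref{thm:necce and suff joint opt}, then argue by backward induction that each $\mathbf{LP_t}(c_t)$ returns the best achievable continuation value; the bookkeeping changes for the randomized $U^0_t$ are indeed routine and are all the paper itself supplies. The problem is the step you yourself flag as ``the main obstacle'': you state the difficulty precisely, but your proposed resolution does not close it. The question is not whether ``any CISR-feasible continuation admits a compatible parent messaging/action strategy'' --- it is whether the specific continuation the algorithm commits to (the $V_{t+1}$-maximizing solution at each child $(c_t,z_{t+1})$) leaves the parent's feasible set large enough to contain a parent strategy as good for the designer as the one used by an arbitrary feasible solution of the Global Problem. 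The stage-$t$ inequalities \eqref{eq:lin joint opt} have the children's agent values $W^1_{t+1}(c_t,\cdot)$ as coefficients; a feasible global solution satisfies them only with its own induced $W^{1,\mathrm{global}}_{t+1}$, which in general differs from the algorithm's $W^1_{t+1}$ whenever the continuations differ. Deferring to ``the same argument as in the proof of Theorem~\ref{thm:alg}'' therefore does not supply the missing step: the corresponding sentence in Appendix~\ref{appendix: proofofalgorithm} (``it is now easy to check that $(\eta^{\mathrm{global}}_l, g^{m,\mathrm{global}}_l, W^{1,\mathrm{global}}_l(c_l), \hat V_l(c_l))$ is feasible for $\mathbf{LP_l}(c_l)$'') tacitly assumes exactly the claim in question, namely that the global solution satisfies \eqref{eq:lp3} with the algorithm's $W^1_{l+1}$ substituted for its own.

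To see that this is not a formality, note that the designer's and the agent's interests need not be aligned at the children, so the $\mathbf{LP_{t+1}}$-optimal continuation can give the agent a strictly smaller value than some $V_{t+1}$-suboptimal but CISR-feasible continuation; and \eqref{eq:lin joint opt} at the parent rewards the designer for promising the agent a high continuation value along the path $h^1$ prescribes. For instance, with $T=2$, obedience as $h^1$, no designer action, and $Z_2=U^1_1$: if $U^1_1=1$ gives the agent a small immediate bonus but leads to a child where the $\mathbf{LP_2}$-optimal messaging is uninformative (low $W^1_2$), while $U^1_1=0$ leads to a child where the agent's value is forced to be high, then the time-$1$ constraint for $m^1_1=1$ can fail under the algorithm's $W^1_2$ yet hold under a more informative, designer-suboptimal time-$2$ policy --- and the designer may earn far more at $t=1$ by inducing $U^1_1=1$. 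The backward induction as written has no mechanism for trading current designer reward against the agent's promised continuation utility, which is what such an instance requires. So the converse direction of your induction needs either a genuine additional argument (or an added hypothesis, e.g., that $W^1_{t+1}(c_{t+1})$ is identical across all solutions of $\mathbf{LP_{t+1}}(c_{t+1})$ meeting the constraints), or a richer recursion that carries the agent's promised utility; as it stands, your proposal asserts rather than proves the one step on which the theorem turns.
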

\begin{proof}
    Based on Theorem \ref{thm:necce and suff joint opt}, Problem \ref{prob:des action unfixed} can be viewed as follows: The designer would like to find a  strategy $g^d$,  the associated $\eta_t$ (as per Definition \ref{def:eta designer action not fixed}), and the common information based value functions $W^1_{T+1}, \ldots, W^1_1$ (defined in  \eqref{eq:agent val func des action unfixed}), such that the inequalities in \eqref{eq:lin joint opt} are satisfied while maximizing the designer's total expected reward. In other words,  Problem \ref{prob:des action unfixed} is equivalent to the following problem:
    \begin{align*}
   &\textbf{Global Problem:} \max_{g^d,\eta_{1:T},W^1_{1:T}} J^0(g^d, h^1) \\
    &\text{s.t.} \text{~for $t=T,T-1,\ldots,1,$} 
    \\
    &\hspace{25pt}\text{$\eqref{eq:beliefs designer action unfixed}$ holds  for all $x_t, p_t^{0,1}, m_t^1, u_t^0, n_t, c_t$}, \\
    &\hspace{25pt}\text{the inequalities in \eqref{eq:lin joint opt} hold for all $m_t^1, p_t^1, c_t$}\\
    &\hspace{25pt}W_t^1(c_t) \text{ satisfies \eqref{eq:agent val func des action unfixed} for all $c_t$ (with ~$W_{T+1}^1 (\cdot) =0$)}.
\end{align*}
    Following the arguments in Appendix \ref{appendix: proofofalgorithm}, it can be verified that (i) $g^{m}_{1:T}, \eta_{1:T},  W_{1:T}^{1}$ obtained from modified Algorithm \ref{alg:1} (with the new $\mathbf{LP_t}(c_t)$) form a feasible solution of the Global Problem above since they satisfy all the constraints of the Global Problem, and (ii) the objective value of the Global problem under any feasible solution is upper bounded by the objective value for the solution obtained from modified Algorithm \ref{alg:1}. Thus, the designer strategy $g^d$ obtained by modified Algorithm \ref{alg:1} is optimal for the  Global problem above and hence for Problem \ref{prob:des action unfixed}. 
\end{proof}

\section{One Designer and Multiple Agents}\label{sec:MultiAgents}
\subsection{Model and Problem Formulation}\label{sec:model multi}
We extend the basic model in Section \ref{sec:model} to allow for multiple agents. For simplicity, we consider a model with one designer and 2 agents but our approach naturally extends to $K>2$ agents. The   dynamic system is now jointly controlled by the designer\footnote{For convenience, we will sometimes refer to the designer as  as agent 0.} and two agents - agent 1 and agent 2. 
The state of the system evolves as follows
\begin{equation}\label{eq:multidyn}
    X_{t+1} = f_t(X_t, U_t^0, U_t^1, U_t^2,N_t),
\end{equation}
where $U^2_t \in \mathcal{U}^2_t$ is agent 2's actions at time $t$.

The information available to the designer, agent 1 and agent 2 at time $t$ are denoted by $I^0_t, I^1_t, I^2_t$ respectively. 
For each $i=0,1,2$, $I_t^i$ can be split into two components - i) the common (or public) information $C_t$ that is available to the designer and all agents, and ii)  private information $P_t^i \in \mathcal{P}_t^i$ which consists of everything in $I^i_t$ that is not in $C_t$.

The designer operates in a manner similar to that in Section \ref{sec:action opt}: at each time $t$, the designer generates a \emph{message-action triplet} $(M_t^1, M_t^2, U_t^0) \in \mathcal{M}_t^1 \times \mathcal{M}_t^2 \times \mathcal{U}_t^0$. This triplet is generated randomly according to a probability distribution $D^d_t$ on $\mathcal{M}_t^1 \times \mathcal{M}_t^2 \times \mathcal{U}_t^0.$ The distribution $D_t^d$ is selected by the designer as a function of its information at time $t$, i.e.,
\begin{equation}
  (M^1_t,M^2_t,U^0_t) \sim D_t^d,  \quad \text{and} \quad  D_t^d = g_t^d(P_t^0, C_t).
\end{equation}
The designer sends  $M^1_t$ to agent 1 and $M^2_t$ to agent 2.
Agents 1 and 2 operate in the same manner as in Section \ref{sec:model}. At time $t$, after agent $i$ ($i=1,2$) receives the message $M^i_t$ from the designer, it generates an action as a function of its information and the message, i.e.,
\begin{equation}\label{eq:multiagent_g}
    U_t^i = g_t^i(M^i_t, P_t^i, C_t).
\end{equation}
where $g_t^i$ is  agent $i$'s action strategy at time $t$ and the collection $g^i := (g^i_1,g^i_2,...,g^i_T)$ is called \emph{agent $i$'s action strategy}. As before,  $\mathcal{G}^i$ denotes the set of all possible action strategies for agent $i$. The strategy triplet for the designer and both agents, $g:= (g^d, g^1, g^2)$, is called the \emph{strategy profile}.

Assumptions \ref{assm:1} and \ref{assm:2} of  Section \ref{sec:model} are modified to include the action and private information of agent 2.
\begin{assumpbis}{assm:1}\label{assm:1 multi}
    Private information $P_t^i$ (where $i =0,1,2$) is given as: for any $t\geq 1$
\begin{equation}
    P^i_{t+1} = {\xi}^i_{t+1}(X_t, P_t^i, U_t^0, U_t^1, U_t^2, N_t),
\end{equation}
where ${\xi}^i_{t+1}$ is a fixed function.
The increment $Z_{t+1}$ is given as
\begin{equation}\label{equ: info increment2}
    Z_{t+1} = \zeta_{t+1}(X_t, P_t^{0:2}, U_t^{0:2}, N_{t})
\end{equation}
$P_1^{0:2}$ and $C_1$ are generated based on $X_1$ and $N_1$ according to a given conditional distribution $\Lambda(p^{0:2}_1, c_1|x_1,n_1)$.
\end{assumpbis} 

\begin{assumpbis}{assm:2}\label{assm:2 multi}
    The common information based conditional beliefs do not depend on the strategy profile, i.e., for any $c_t$ that has non-zero probability under strategy profiles $g$ and $\Tilde{g}$, 
    \begin{equation}
        \mathbb{P}^{g}\left(X_t = x, P_t^{0:2} = p^{0:2} | c_t\right) 
            = \mathbb{P}^{\tilde{g}}\left(X_t = x, P_t^{0:2} = p^{0:2} | c_t\right).
    \end{equation}
\end{assumpbis}
As in \eqref{eq:beliefs}, we can also associate a unique belief  with each realization of common information $c_t$:
\begin{equation}\label{def:new belief multi}
    \pi_t(x, p^{0:2} | c_t) := \mathbb{P}^{g}\left(X_t = x, P_t^{0:2} = p^{0:2} | C_t =c_t\right),
\end{equation}
where $g$ is any strategy profile under which $c_t$ has non-zero probability.

At each time $t$, agent $i,\:i=0, 1,2,$ receives a reward $r^i_t(X_t, U_t^0,U_t^1,U_t^2)$. The total expected reward for agent $i$ under the strategy profile $g:= (g^d, g^1, g^2)$ is given as:
\begin{equation}\label{eq:multiagent reward-to-go}
    J^i(g^d, g^1, g^2) := \mathbb{E}^{g}\left[\sum_{t=1}^T r^i_t(X_t, U_t^0, U_t^1,U_t^2)\right].
\end{equation}

The designer would like to incentivize agents 1 and 2 to use specific strategies $h^1$ and $h^2$ respectively. The following definition of common information based sequential rationality is similar to Definition \ref{def: sequential rationality des action unfixed}. 
\begin{definition}\label{def: multi sequential rationality}
  For $i=1,2,$ we say that agent $i$'s action strategy $h^i$ satisfies common information based sequential rationality (CISR) with respect to the designer strategy $g^d$ and the action strategy $h^{-i}$ of the other agent if the following is true\footnote{We use $-i$ to indicate all agents except agent $i$ or the designer.}:\\
    {For each time $t$ and each possible realization $c_t$ of common information at time $t$,}
    \begin{align}
    &\mathbb{E}^{(g^d, h^i, h^{-i})_{t:T}}\left[\sum_{k=t}^T r^i_t(X_k, U_k^{0,1,2}) \Bigm| c_t\right] \geq \mathbb{E}^{(g^d, g^i, h^{-i})_{t:T}}\left[\sum_{k=t}^T r^i_t(X_k, U_k^{0,1,2}) \Bigm| c_t\right]~\forall g^i \in  \mathcal{G}^i. \label{eq:cisr multi}
    \end{align}
\end{definition}
The expectation on the left hand side of \eqref{eq:cisr multi} is to be interpreted as follows: Given $c_t$, we have an associated belief $\pi_t$ on $X_t,P^{0,1,2}_t$ given by \eqref{def:new belief multi}. With $C_t=c_t,$ $X_t,P^{0,1,2}_t$ distributed according to $\pi_t$, and future states,  action and information variables generated using strategies $(g^d, h^i, h^{-i})_{t:T}$, the  left hand side of \eqref{eq:cisr multi} is the expected reward-to-go for  agent $i$. A similar interpretation holds for the right hand side of \eqref{eq:cisr multi}.     
 If all the inequalities in the definition above are true, we  say that   ``$h^i$ satisfies $CISR(g^d,h^{-i})$''.  If $h^1$ satisfies $CISR(g^d,h^{2})$ and $h^2$ satisfies $CISR(g^d,h^{1})$, we will say that ``$(h^1, h^2)$ satisfies $CISR(g^d)$''.
We state the designer's problem below.
\begin{problem}\label{prob: multi}
Given fixed $h^{1}, h^2$, the designer's goal is to find an optimal strategy $g^d$ that maximizes the designer's total expected reward while ensuring  that $(h^1, h^2)$ satisfies common information based sequential rationality with respect to $g^d$ as per Definition \ref{def: multi sequential rationality}. That is, 
    the designer would like to solve the following strategy optimization problem:
    \begin{align*}
    &\max_{g^d \in \mathcal{G}^d} J^0(g^d, h^1, h^2) \\
    &\hspace{10pt}\text{s.t. $(h^1,h^2)$ satisfies $CISR(g^d)$.}
    \end{align*}
  \end{problem}
\subsection{Solution Approach}\label{sec:SA multi}
The approach is similar to  Section \ref{sec:SA Joint}. 
We first modify our common information based belief $\eta_t$ as follows.
\begin{definition}\label{def:eta multi}
    Given a designer  strategy $g^d$ and a realization $c_t$ of the common information at time $t$, we define the following common information based belief on $X_t,P^{0,1,2}_t, M^{1,2}_t,U^0_t,N_t$: 
    \begin{align}\label{eq:eta multi}
    &\eta_t(x_t, p^{0:2}_t,m_t^{1,2}, u_t^0, n_t
    | c_t) =Q_t(n_t)\pi_t(x_t,p^{0:2}_t | c_t)g^d_t(m_t^{1,2}, u_t^0|p^0_t,c_t),
\end{align}
for all $x_t, p^{0:2}_t,m_t^{1,2}, u_t^0, n_t$.
\end{definition}
Consider a fixed designer strategy $g^d$. This $g^d$ induces $\eta_t$ as per Definition \ref{def:eta multi}. The common information based value functions for the agents are similar to \eqref{eq:agent val func des action unfixed}.
 For $i=1,2,$ $W_{T+1}^i(c_{T+1}):=0$, and for $t \le T$, \eqref{eq:agent val func des action unfixed} is modified to be

\begin{align}\label{eq:multi agent value function time any t}
    &W_t^i(c_t) := \sum_{\substack{x\in\mathcal{X}_t,p^0\in\mathcal{P}_t^0,p^1\in\mathcal{P}_t^1\\p^2\in\mathcal{P}_t^2, m^1 \in \mathcal{M}_t^1,m^2 \in \mathcal{M}_t^2 \\ u_t^0\in \mathcal{U}_t^0, n\in \mathcal{N}_t}}
\eta_t(x,p^{0:2}, m^{1,2}, u_t^0, n | c_t) \Big[r^i_t(x,u_t^{0:2})+ W_{t+1}^i(c_t, z_{t+1})\Big],
\end{align}
where $u_t^i = h_t^i(m^i, p^i, c_t), \:i=1,2,$ and  $z_{t+1} = \zeta_{t+1}(x, p^{0:2}, u_t^{0:2}, n)$. It can be verified by a backward inductive argument that $W_t^i(c_t)$ is the left hand side of \eqref{eq:cisr multi} in Definition \ref{def: multi sequential rationality}.
The following theorem provides a necessary and sufficient condition for the requirement that ``$(h^1,h^2)$ satisfies $CISR(g^d)$".
\begin{theorem}\label{thm:cisr equivalent linear}
    In Problem \ref{prob: multi}, $(h^1,h^2)$ satisfies $CISR(g^d)$ if and only if the following statement is true:
    
    For $i=1,2$, $t=T,T-1,\ldots 1,$ and for each  $c_t \in \mathcal{C}_t, m_t^i \in \mathcal{M}_t^i,p_t^i \in \mathcal{P}_t^i $, 
    \begin{align}
&\sum_{\substack{x\in\mathcal{X}_t,p^0\in\mathcal{P}_t^{0},p^j\in\mathcal{P}_t^{j} \\ m^j \in \mathcal{M}_t^j, u_t^0\in\mathcal{U}_t^0, n\in \mathcal{N}_t}}
\eta_t(x,p^{0,j}, p^i_t, m_t^i, m^j, u_t^0, n | c_t) \Big[r^i_t(x,u_t^0, u_t^i, u_t^j)+ W_{t+1}^i(c_t,{z}_{t+1})\Big] \notag \\
& \geq \sum_{\substack{x\in\mathcal{X}_t,p^0\in\mathcal{P}_t^{0},p^j\in\mathcal{P}_t^{j} \\ m^j \in \mathcal{M}_t^j, u_t^0\in\mathcal{U}_t^0, n\in \mathcal{N}_t}}
\eta_t(x,p^{0,j}, p^i_t,  m_t^i, m^j, u_t^0, n | c_t)\Big[r^i_t(x,u_t^0, u, u_t^j))+ W_{t+1}^i(c_t,\tilde{z}_{t+1})\Big],~\forall {u}\in \mathcal{U}_t^i, \label{eq:multilp}
    \end{align}
where $j =-i$, $u_t^i = h_t^i(m_t^i, p_t^i, c_t)$, $u_t^j = h_t^j(m_t^j, p_t^j, c_t)$, $z_{t+1} = \zeta_{t+1}(x, p^{0,j}, p^i_t, u_t^{0, i, j}, n)$, $\tilde{z}_{t+1} = \zeta_{t+1}(x, p^{0,j}, p^i_t, u_t^0, u, u_t^j, n)$.
\end{theorem}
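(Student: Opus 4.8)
The plan is to reduce Theorem~\ref{thm:cisr equivalent linear} to the single-agent result (Theorem~\ref{thm:necce and suff joint opt}) applied separately to each agent, by treating the other agent's fixed strategy as part of the ``designer-like'' environment. Fix $i \in \{1,2\}$ and let $j = -i$. The key observation is that, from agent $i$'s perspective, agent $j$ uses the \emph{fixed} strategy $h^j$, so the map $(M^j_t, P^j_t, C_t) \mapsto U^j_t = h^j_t(M^j_t,P^j_t,C_t)$ is deterministic and strategy-independent. Thus the pair $(U^0_t, U^j_t)$ plays the role that the designer action $U^0_t$ played in Section~\ref{sec:action opt}: it is generated as a (randomized) function of the designer's private information $P^0_t$, the common information $C_t$, and the private information $P^j_t$ of agent $j$ --- none of which agent $i$ controls. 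Concretely, I would define an ``effective private information'' for the environment facing agent $i$ as $\tilde P^0_t := (P^0_t, P^j_t)$ and an ``effective designer action'' $\tilde U^0_t := (U^0_t, U^j_t)$, and check that under Assumptions~\ref{assm:1 multi} and \ref{assm:2 multi} the triple $(X_t, \tilde P^0_t, \tilde P^i_t)$, the common information $C_t$, the message $M^i_t$, and the effective designer action $\tilde U^0_t$ satisfy exactly the structural hypotheses used in Theorem~\ref{thm:necce and suff joint opt}: the private-information and common-information recursions of Assumption~\ref{assm:1 multi} restrict correctly, the beliefs $\pi_t(x,p^{0:2}|c_t)$ are strategy-independent, and the CISR condition \eqref{eq:cisr multi} for agent $i$ is literally \eqref{eq:cisr des action unfixed} with $(g^d, h^1)$ replaced by (the induced effective-designer strategy, $g^i$).

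Next I would carry the argument through the machinery of Section~\ref{sec:SA Joint}. The belief $\eta_t$ of Definition~\ref{def:eta multi} marginalizes down to the agent-$i$ analogue of Definition~\ref{def:eta designer action not fixed} once we group variables as above: summing $\eta_t(x, p^{0:2}, m^{1,2}, u^0_t, n | c_t)$ over $m^i_t$'s irrelevant coordinates and substituting $u^j_t = h^j_t(m^j_t,p^j_t,c_t)$ yields a joint law on $(X_t, \tilde P^0_t, \tilde P^i_t, M^i_t, \tilde U^0_t, N_t)$ of the required product form. The value function $W^i_t$ defined in \eqref{eq:multi agent value function time any t} is, after the same regrouping, exactly the agent value function \eqref{eq:agent val func des action unfixed} of the single-agent problem (with $r^i_t(x, u^0_t, u^i_t, u^j_t)$ read as a reward depending on $x$, the effective designer action, and agent $i$'s action). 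Therefore Theorem~\ref{thm:necce and suff joint opt} applies verbatim: ``$h^i$ satisfies $CISR(g^d,h^{-i})$'' holds if and only if, for every $t$ and every $c_t, m^i_t, p^i_t$, inequality \eqref{eq:lin joint opt} holds with the effective identifications. Unwinding those identifications --- expanding $\tilde U^0_t = (u^0_t, u^j_t)$, $\tilde P^0_t = (p^0_t, p^j_t)$ with $u^j_t = h^j_t(m^j_t, p^j_t, c_t)$, and writing the sum over $p^0, p^j, m^j, u^0_t, n$ --- gives precisely \eqref{eq:multilp}, including the stated forms of $z_{t+1}$ and $\tilde z_{t+1}$ via \eqref{equ: info increment2}. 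Finally, since $(h^1,h^2)$ satisfies $CISR(g^d)$ means exactly that both ``$h^1$ satisfies $CISR(g^d,h^2)$'' and ``$h^2$ satisfies $CISR(g^d,h^1)$'' hold, taking the conjunction over $i=1,2$ yields the theorem.

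The main obstacle --- really a bookkeeping burden rather than a conceptual one --- is verifying that the regrouped variables genuinely satisfy the hypotheses of the single-agent results, in particular that Assumption~\ref{assm:1 multi} implies the effective private information $\tilde P^i_t = P^i_t$ of agent $i$ still evolves as a function of $(X_t, \tilde P^i_t, \tilde U^0_t, U^i_t, N_t)$ only (this is immediate since $\xi^i_{t+1}$ depends on $X_t, P^i_t, U^{0:2}_t, N_t$, and $U^j_t$ is absorbed into $\tilde U^0_t$) and that the belief $\pi_t$ restricted to $(X_t,\tilde P^0_t, \tilde P^i_t)$ is the marginal of the strategy-independent belief in Assumption~\ref{assm:2 multi} and hence itself strategy-independent. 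One should also note that $h^j$ being a \emph{fixed} strategy is essential: it is what allows $U^j_t$ to be folded into a deterministic-given-$(M^j_t,P^j_t,C_t)$ component of the environment, so that the argmax characterization of optimality for agent $i$ (the analogue of \eqref{eq:suff1_joint}) is unaffected by any choice agent $i$ might make. Alternatively, rather than invoking Theorem~\ref{thm:necce and suff joint opt} as a black box, one can simply re-run the proofs of Lemmas~\ref{lem:suff1} and \ref{lem:nece1} and the derivation \eqref{eq:lp2}--\eqref{eq:lp3} line by line with $U^0_t$ replaced by $(U^0_t, U^j_t)$ and $P^0_t$ replaced by $(P^0_t, P^j_t)$; this is the more self-contained route and is the one I would actually write out.
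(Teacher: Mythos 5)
Your proposal is correct and, in substance, identical to the paper's proof: Appendix \ref{app:multi equiv} establishes the multi-agent analogue of Lemmas \ref{lem:suff1} and \ref{lem:nece1} (namely Lemma \ref{lem:suff nece multi}) by re-running those arguments with $U^0_t$ replaced by $(U^0_t,U^j_t)$ and $P^0_t$ by $(P^0_t,P^j_t)$, and then linearizes the resulting argmax condition exactly as in \eqref{eq:lp1}--\eqref{eq:lp3} --- which is precisely the ``self-contained route'' you say you would actually write out. Your headline framing as a black-box reduction to Theorem \ref{thm:necce and suff joint opt} is a cleaner packaging of the same substitution, and it is sound because $h^j$ is fixed, so the effective designer strategy generating $(M^i_t,(U^0_t,U^j_t))$ as a randomized function of $((P^0_t,P^j_t),C_t)$ has exactly the form required in Section \ref{sec:action opt}.
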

\begin{proof}
    {See Appendix \ref{app:multi equiv}.} 
\end{proof}

We now modify the common information based value functions for the designer as follows: $V_{T+1}(c_{T+1}):=0,$ and for $t \le T$,
\begin{align}\label{eq:designer value function time any t multi}
    &V_t(c_t) := \sum_{\substack{x\in\mathcal{X}_t,p^0\in\mathcal{P}_t^0,p^1\in\mathcal{P}_t^1\\p^2\in\mathcal{P}_t^2, m^1 \in \mathcal{M}_t^1,m^2 \in \mathcal{M}_t^2 \\ u_t^0\in \mathcal{U}_t^0, n\in \mathcal{N}_t}}
\eta_t(x,p^{0:2}, m^{1,2}, u_t^0, n| c_t)\Big[r^0_t(x,u_t^{0:2})+ V_{t+1}(c_t, z_{t+1})\Big],
\end{align}
where $u_t^i = h_t^i(m^i, p^i, c_t), \:i=1,2,$ and  $z_{t+1} = \zeta_{t+1}(x, p^{0:2}, u_t^{0:2}, n)$.
We can now  present a backward inductive algorithm for Problem \ref{prob: multi} that finds an optimal designer strategy by solving a sequence of linear programs.

\begin{algorithm}[H]
\caption{}\label{alg:2}
\begin{algorithmic}
    \State{$W^1_{T+1}(\cdot) = W^2_{T+1}(\cdot) = V_{T+1}(\cdot) =0$}
    \For{$t=T,T-1,\ldots,2,1$}

        \For{ each $c_t \in \mathcal{C}_t$} 
            \State{
            $\eta_t(\cdot|c_t),g_t^d(\cdot|\cdot,c_t), V_t(c_t), W_t^1(c_t), W_t^2(c_t) =$ 
            \par \hspace{15pt}Solution of the  linear program $\MALP(c_t)$
            \vspace{10pt}
            \par
                \hspace{10pt}$\MALP(c_t) :\max_{\substack{\eta_t(\cdot|c_t),g_t^d(\cdot|\cdot,c_t),\\V_t(c_t), W^1_t(c_t),W^2_t(c_t)}} V_t(c_t)$ \par
                \vspace{10pt}
                \hspace{10pt}s.t.\:$\eqref{eq:eta multi}$ holds  for all $x_t, p_t^{0:2},m_t^{1,2}, u_t^0, n_t$,
                \par
                \hspace{15pt}the inequalities  in \eqref{eq:multilp} holds $\forall \:m_t^i, p_t^i,$ and $i=1,2$,
                \par
                \hspace{15pt}$W_t^i(c_t)$ satisfies \eqref{eq:multi agent value function time any t} for each $i=1,2$, 
                \par
                \hspace{15pt}$V_t(c_t)$ satisfies \eqref{eq:designer value function time any t multi}}.
            \EndFor
            \State{$g^d_t = \{g_t^d(\cdot|\cdot,c_t)\}_{c_t \in \mathcal{C}_t}$}
    \EndFor
     \State{\Return{$g^d = (g_1^d, \ldots, g_T^d)$}}
\end{algorithmic}
\end{algorithm}

\begin{theorem}\label{thm:alg multi}
   The designer strategy $g^d$ returned by Algorithm \ref{alg:2} is an optimal solution for Problem \ref{prob: multi}.
\end{theorem}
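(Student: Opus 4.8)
The plan is to mirror the argument used for Theorem~\ref{thm:joint opt} (whose proof in turn leans on Appendix~\ref{appendix: proofofalgorithm}), adapting it to the two-agent setting where the CISR constraints must hold simultaneously for both agents. First I would invoke Theorem~\ref{thm:cisr equivalent linear} to replace the constraint ``$(h^1,h^2)$ satisfies $CISR(g^d)$'' by the explicit collection of linear inequalities \eqref{eq:multilp} (one family for $i=1$, one for $i=2$), each linear in $\eta_t$, which is itself linear in $g^d_t$ via Definition~\ref{def:eta multi}. This lets me rewrite Problem~\ref{prob: multi} as a ``Global Problem'': maximize $J^0(g^d,h^1,h^2)$ over $g^d,\eta_{1:T},W^1_{1:T},W^2_{1:T}$ subject to, for each $t$ and $c_t$: \eqref{eq:eta multi} holds for all $x_t,p^{0:2}_t,m^{1,2}_t,u^0_t$; the inequalities \eqref{eq:multilp} hold for all $m^i_t,p^i_t$ and $i=1,2$; and $W^i_t(c_t)$ satisfies \eqref{eq:multi agent value function time any t} for $i=1,2$. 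I should also note that $J^0(g^d,h^1,h^2) = \mathbb{E}^{g^d}[V_1(C_1)]$ when $V_1$ is the designer value function \eqref{eq:designer value function time any t multi} built from the chosen $\eta_{1:T}$, so maximizing $J^0$ is consistent with the per-$c_t$ objective $V_t(c_t)$ used inside $\MALP(c_t)$.

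The core of the argument is then the two-part equivalence between the Global Problem and the backward-inductive family $\{\MALP(c_t)\}$, exactly as in Appendix~\ref{appendix: proofofalgorithm}. \emph{Feasibility:} the tuple $g^d_{1:T},\eta_{1:T},W^1_{1:T},W^2_{1:T}$ produced by Algorithm~\ref{alg:2} satisfies every constraint of the Global Problem, because $\MALP(c_t)$ by construction enforces \eqref{eq:eta multi}, \eqref{eq:multilp} (for both agents), and \eqref{eq:multi agent value function time any t} (for both agents) at each $(t,c_t)$, and concatenating the per-$c_t$ solutions over all $c_t\in\mathcal{C}_t$ and all $t$ yields a full-horizon feasible point. \emph{Optimality:} for any feasible point of the Global Problem I would show by backward induction on $t$ that its designer value functions are pointwise dominated by those computed by Algorithm~\ref{alg:2}, i.e.\ $V^{\mathrm{feas}}_t(c_t)\le V^{\mathrm{alg}}_t(c_t)$ for all $c_t$. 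The base case $t=T+1$ is $0=0$. For the inductive step: given the claim at $t+1$, any feasible point restricted to time $t$ and a fixed $c_t$ is a feasible point of $\MALP(c_t)$ once we substitute $W^{i,\mathrm{alg}}_{t+1}$ for $W^{i,\mathrm{feas}}_{t+1}$ in \eqref{eq:multilp} and in \eqref{eq:multi agent value function time any t} — here one must check that this substitution preserves feasibility and can only increase the objective; monotonicity of \eqref{eq:designer value function time any t multi} in $V_{t+1}$ together with the induction hypothesis gives $V^{\mathrm{feas}}_t(c_t)\le$ (objective of that point in $\MALP(c_t)$) $\le V^{\mathrm{alg}}_t(c_t)$. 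Taking expectations over $C_1$ at $t=1$ yields $J^0(g^{d,\mathrm{feas}})\le J^0(g^{d,\mathrm{alg}})$, so the algorithm's output is optimal.

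The main obstacle — and the point where the two-agent case genuinely differs from Theorem~\ref{thm:joint opt} — is the substitution step in the inductive argument for the agents' value functions and constraints. With two agents, changing $W^2_{t+1}$ (say, to its algorithm value) alters the right-hand and left-hand sides of the $i=1$ inequalities \eqref{eq:multilp} only through the value functions, not through agent~2's strategy (which is fixed at $h^2$), so the coupling is milder than it first appears: the inequalities \eqref{eq:multilp} for agent~$i$ at time $t$ depend on the strategies only via $h^i,h^{-i}$ and $\eta_t$, and on the future only via $W^i_{t+1}$. The delicate bookkeeping is that the Global Problem's constraints at time $t$ reference $W^i_{t+1}$, which in a generic feasible point need not equal the algorithm's $W^i_{t+1}$; I must argue that replacing the feasible point's $W^i_{t+1}$ by the algorithm's does not break \eqref{eq:multilp} at time $t$. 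This follows because the algorithm's $\MALP(c_{t+1})$ forces $W^i_{t+1}(c_{t+1})$ to equal exactly the expression \eqref{eq:multi agent value function time any t} evaluated at the algorithm's $\eta_{t+1}$ — i.e.\ it is the true agent-$i$ reward-to-go under the algorithm's strategy — and the equivalence of Theorem~\ref{thm:cisr equivalent linear} guarantees that using these ``true'' future value functions inside \eqref{eq:multilp} is precisely what CISR requires; thus any $\eta_t$ feasible for the original future is still feasible against the algorithm's future. Once this is pinned down, the rest is the routine backward induction described above, and I would simply refer the reader to Appendix~\ref{appendix: proofofalgorithm} for the analogous details, noting only the modifications needed to carry the two agents' constraints in parallel.
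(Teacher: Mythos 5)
Your proposal takes essentially the same route as the paper: the paper's own proof of Theorem \ref{thm:alg multi} is a one-line pointer to the arguments for Theorems \ref{thm:alg} and \ref{thm:joint opt}, and you have reconstructed exactly that argument — reformulate the CISR constraint via Theorem \ref{thm:cisr equivalent linear} into a Global Problem, check that Algorithm \ref{alg:2}'s output is feasible for it, and show by backward induction that the algorithm's designer value functions dominate those of any feasible point; carrying the two agents' constraints in parallel is the only modification the paper intends.

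One remark on the step you single out as the ``main obstacle,'' since it is the only place where you go beyond citing Appendix \ref{appendix: proofofalgorithm}. You are right that the delicate point is showing that a Global-feasible $\eta_t(\cdot|c_t)$ still satisfies the inequalities \eqref{eq:multilp} after the feasible point's continuation values $W^{i}_{t+1}$ are replaced by the algorithm's (these generally differ, since $W^i_{t+1}(c_{t+1})$ depends on the continuation strategy through $\eta_{t+1:T}$). But the justification you give — that Theorem \ref{thm:cisr equivalent linear} guarantees the inequalities hold with the ``true'' future value functions — does not close this. Theorem \ref{thm:cisr equivalent linear} characterizes CISR for a single internally consistent strategy profile, whereas what is needed here is feasibility of a \emph{hybrid} point: the global strategy at time $t$ spliced with the algorithm's continuation. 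Since \eqref{eq:multilp} compares $W^i_{t+1}(c_t,z_{t+1})$ against $W^i_{t+1}(c_t,\tilde{z}_{t+1})$ at \emph{different} increments, swapping the continuation value function can in principle flip an inequality, so ``any $\eta_t$ feasible for the original future is still feasible against the algorithm's future'' is an assertion, not a consequence of Theorem \ref{thm:cisr equivalent linear}. To be fair, the paper's Appendix \ref{appendix: proofofalgorithm} makes the identical move without further argument (``it is now easy to check that $\ldots$ is a feasible solution for $\mathbf{LP_l}(c_l)$''), so your write-up is exactly as rigorous as the paper's own treatment of Theorem \ref{thm:alg}; just be aware that this substitution is the one step that neither your argument nor the paper actually proves.
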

\begin{proof}
    The result follows from arguments similar to those in the proofs for Theorems \ref{thm:alg} and \ref{thm:joint opt}. 
\end{proof}
\subsection{Computational Considerations}\label{sec:comp}
Examining Algorithm \ref{alg:2}, we observe that at each time $t$, the number of possible common information realizations determines the number of linear programs that must be solved. This number can grow very quickly with time.  In this section, we aim to identify conditions that reduce the number of linear programs to be solved, thereby improving computational efficiency. We will consider the special case where the agent strategies $h^1$ and $h^2$ that the designer wants to incentivize depend on $c_t$ only through the belief $\pi_t$, i.e., for any two realizations of common information $c_t, \hat{c}_t$ such that $\pi_t(\cdot|c_t) = \pi_t(\cdot|\hat{c}_t),$ we have that $h^i_t(m,p,c_t) = h^i_t(m,p,\hat{c}_t)$ for all $m,p$, and $i=1,2$. A simple example of such strategies are \emph{obedient strategies} where $h^i_t(m,p,c_t) = m$.

\begin{theorem}\label{thm:belief comp}
    Suppose that the strategies $h^1_t$ and $h^2_t$ depend on $\pi_t$ instead of $c_t$, (i.e., for any two realizations of common information $c_t, \hat{c}_t$ such that $\pi_t(\cdot|c_t) = \pi_t(\cdot|\hat{c}_t),$ we have that $h^i_t(m,p,c_t) = h^i_t(m,p,\hat{c}_t)$ for all $m,p$, and $i=1,2$). Consider any  two realizations of common information $c_t, \hat{c}_t$ such that $\pi_t(\cdot|c_t) = \pi_t(\cdot|\hat{c}_t)$. Then, an optimal solution \( \MALP(c_t) \) is optimal for \( \MALP(\hat{c}_t) \) as well. 
\end{theorem}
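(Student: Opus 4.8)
The plan is to show that the linear program $\MALP(c_t)$ depends on its parameter $c_t$ only through the belief $\pi_t(\cdot|c_t)$ and the (already-computed) value functions $W^1_{t+1},W^2_{t+1},V_{t+1}$, so that two common information realizations inducing the same belief give rise to identical linear programs (up to the harmless relabeling of the variable $g_t^d(\cdot|\cdot,c_t)$). First I would inspect the data of $\MALP(c_t)$ term by term. The belief $\eta_t(\cdot|c_t)$ defined by \eqref{eq:eta multi} is built from $Q_t$, $\pi_t(\cdot|c_t)$, and the optimization variable $g_t^d(\cdot|\cdot,c_t)$; since $Q_t$ is fixed, the constraint \eqref{eq:eta multi} becomes the same affine relation between $\eta_t$ and $g_t^d$ once $\pi_t(\cdot|c_t)=\pi_t(\cdot|\hat c_t)$. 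The reward terms $r^i_t, r^0_t$ and the increment function $\zeta_{t+1}$ do not depend on $c_t$ at all. The strategies $h^1_t,h^2_t$ enter \eqref{eq:multilp}, \eqref{eq:multi agent value function time any t}, \eqref{eq:designer value function time any t multi} through $u_t^i = h_t^i(m^i,p^i,c_t)$; by the hypothesis of the theorem, $h^i_t(m,p,c_t)=h^i_t(m,p,\hat c_t)$ for all $m,p$, so these coefficients coincide as well.

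The one remaining place $c_t$ appears is inside the value functions $W^i_{t+1}(c_t,Z_{t+1})$ and $V_{t+1}(c_t,Z_{t+1})$, where $c_t$ is prepended to the increment $Z_{t+1}$ to form $c_{t+1}$. Here I would need an auxiliary claim: if $\pi_t(\cdot|c_t)=\pi_t(\cdot|\hat c_t)$, then for every realization $z_{t+1}$ of the increment we have $W^i_{t+1}(c_t,z_{t+1})=W^i_{t+1}(\hat c_t,z_{t+1})$ and likewise for $V_{t+1}$. This should follow by backward induction: the belief $\pi_{t+1}(\cdot|c_t,z_{t+1})$ is obtained from $\pi_t(\cdot|c_t)$ by a Bayesian update that uses only the strategy-independent dynamics ($\xi^i_{t+1}$, $\zeta_{t+1}$, $f_t$, $Q_t$) and $h^1_{t+1},h^2_{t+1}$ (which themselves depend only on beliefs, by hypothesis at time $t+1$); hence $\pi_t(\cdot|c_t)=\pi_t(\cdot|\hat c_t)$ implies $\pi_{t+1}(\cdot|c_t,z_{t+1})=\pi_{t+1}(\cdot|\hat c_t,z_{t+1})$ whenever $z_{t+1}$ has positive probability, and then the inductive hypothesis (that $\MALP$ at time $t+1$ is belief-determined, so its optimal values $W^i_{t+1},V_{t+1}$ are functions of the belief) closes the loop. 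For unreachable $z_{t+1}$, the corresponding value-function arguments never get multiplied by a nonzero $\eta_t$-coefficient in any of \eqref{eq:multilp}, \eqref{eq:multi agent value function time any t}, \eqref{eq:designer value function time any t multi}, so their values are irrelevant.

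Having established that the objective and all constraints of $\MALP(c_t)$ and $\MALP(\hat c_t)$ are literally the same (identifying the decision variable $g_t^d(\cdot|\cdot,c_t)$ with $g_t^d(\cdot|\cdot,\hat c_t)$), the conclusion is immediate: their feasible sets coincide and their objectives coincide, so any optimal solution of one is optimal for the other. I would carry out the steps in this order: (1) peel off the $c_t$-independent pieces of $\MALP(c_t)$; (2) use the hypothesis to handle the $h^i_t$ coefficients; (3) prove the auxiliary backward-induction claim that $\pi_t(\cdot|c_t)=\pi_t(\cdot|\hat c_t)$ forces agreement of $W^i_{t+1},V_{t+1}$ on all reachable continuation histories, and note irrelevance of unreachable ones; (4) conclude equality of the two LPs.

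The main obstacle is step (3). The subtlety is that "$W^i_{t+1}$ and $V_{t+1}$ are belief-determined" is exactly a statement we are trying to prove by induction over $t$, so the argument must be set up as a single backward induction whose inductive hypothesis simultaneously asserts that $\MALP$ at times $>t$ yields belief-measurable value functions and that the Bayesian belief update from time $t$ to $t+1$ is itself a function of the belief (this last fact relies crucially on Assumption~\ref{assm:2 multi} / \ref{assm:1 multi} and on $h^i_{t+1}$ depending only on $\pi_{t+1}$). Care is also needed with ties in the $\argmax$ / multiple optima of $\MALP$: the statement "an optimal solution of $\MALP(c_t)$ is optimal for $\MALP(\hat c_t)$" is unambiguous only because the two programs are identical, so $W^i_{t+1},V_{t+1}$ — being the common \emph{optimal value} $V_t(c_t)$ returned by the algorithm — are well-defined functions of the belief regardless of which optimizer is selected; I would make this explicit to keep the induction clean.
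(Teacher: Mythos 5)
Your proposal is correct and follows essentially the same route as the paper: a backward induction establishing that the coefficients of $\MALP(c_t)$ depend on $c_t$ only through $\pi_t(\cdot|c_t)$ and the time-$(t{+}1)$ value functions, combined with the fact (from Assumption~\ref{assm:2 multi}, via \cite{nayyar2013common}) that $\pi_{t+1}(\cdot|c_t,z_{t+1})$ is determined by $\pi_t(\cdot|c_t)$ and $z_{t+1}$ alone, so the value functions are belief-determined and the two LPs are identical. Your worry in step (3) about the belief update depending on $h^i_{t+1}$ is dissolved by the strategy-independence of beliefs, which is exactly how the paper closes the induction.
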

\begin{proof}
     First consider time $t=T$ and the linear programs \( \MALP(c_T) \) and \( \MALP(\hat{c}_T) \). 
      Under the same belief $\pi_T$ associated with $c_T$ and $\hat{c}_T$, the linear equalities of \eqref{eq:eta multi} connecting the optimization variables $\eta_T$ and $g_T^d$ are the same across the two linear programs. Furthermore, the terms multiplying $\eta_T(\cdot)$ in \eqref{eq:multilp}, \eqref{eq:multi agent value function time any t}, \eqref{eq:designer value function time any t multi}  are identical in the two linear programs. Thus, the two linear programs are identical and will share an optimal solution. In particular, the optimal  value functions $V_T,W^1_T,W^2_T$ depend on the belief $\pi_T$ and not on the realization of  the common information at time $T$.
    
     We can now proceed inductively. Assume that $V_{t+1}, W_{t+1}^{1,2}$ depend only on $\pi_{t+1}$. Consider $c_t, \hat{c}_t$ such that $\pi_t(\cdot|c_t) = \pi_t(\cdot|\hat{c}_t)$. Then, for any realization of the common information increment $z_{t+1}$, the beliefs $\pi_{t+1}(\cdot|c_t,z_{t+1})$ and $\pi_{t+1}(\cdot|\hat{c}_t,z_{t+1})$ will also be the same. (This is because $\pi_{t+1}$ depends only on $\pi_t$ and the increment $z_{t+1}$, see {\cite[section II.D, Lemma 1 and equation (10)]{nayyar2013common}}). Now, since $(c_t,z_{t+1})$ and  $(\hat{c}_t,z_{t+1})$ result in the same $\pi_{t+1}$, the induction hypothesis says that they will have same value functions at $t+1$. That is,  we have that   $V_{t+1}(c_t, z_{t+1}) =  V_{t+1}(\hat{c}_t, z_{t+1})  $ and $W^i_{t+1}(c_t, z_{t+1}) =  W^i_{t+1}(\hat{c}_t, z_{t+1}),$ for $i=1,2$, and for each realization $z_{t+1}$ of the common information increment.  We can proceed as we did at time $T$: the linear equalities of \eqref{eq:eta multi}, and the terms multiplying $\eta_t(\cdot)$ in \eqref{eq:multilp}, \eqref{eq:multi agent value function time any t}, \eqref{eq:designer value function time any t multi}  are identical in  \( \MALP(c_t) \) and \( \MALP(\hat{c}_t) \). Thus, the two linear programs are identical and will share an optimal solution. In particular, the optimal  value functions $V_t,W^1_t,W^2_t$ depend on the belief $\pi_t$ and not on the realization of  the common information at time $t$. This completes the induction argument. 
\end{proof}

When the conditions of   Theorem \ref{thm:belief comp} are met, then, at each time $t$, instead of solving a linear program for each possible  common information realization $c_t$, we only need to solve a linear program for each possible belief $\pi_t$. Since several common information realizations may result in the same belief $\pi_t$, this reduces the number of linear programs to be solved.

\begin{remark}
    While Theorem \ref{thm:belief comp} is written and proved  for Problem \ref{prob: multi} (specifically, Algorithm \ref{alg:2}), it is easy to see that a similar statement holds for Problems \ref{prob: backup} and \ref{prob:des action unfixed} as well.
\end{remark}

\section{An Example}\label{sec:example v2}
\emph{The Model:}
We consider an example with one designer and $K$ agents. The setup described below is a modification of a congestion game example in \cite{tavafoghi2017informational}. The $K$ agents need to travel from an origin to a destination on each day of a $T-$ day horizon. Each agent can choose to take one of two routes. Route $0$ is a safe route associated with a fixed condition  $a$, where $a$ is a positive number  known to all agents and the designer. Route $1$ is a risky route whose condition on day $t$ is described by a random variable $X_t$. (The condition of a route on day $t$ can be interpreted as its traffic capacity on that day) The process  $X_t, t=1,\ldots,T,$ is an uncontrolled Markov chain with state space $\mathcal{X}_t =\{\theta^1,\theta^2\}$ (where $\theta^1,\theta^2$ are non-negative numbers) and initial state distribution given as $\mathbb{P}(X_1 = \theta^1) = p_{\theta^1}, \mathbb{P}(X_1 = \theta^2) = p_{\theta^2}$. The Markov chain evolution can be written as $X_{t+1} = f(X_t, N_t),$ where $N_t \in \mathcal{N}, t \ge 1,$ are iid random variables with distribution $Q$. For $i,j=1,2,$ let $\mathbf{P}(\theta^i,\theta^j)$ denote the   transition probabilities of the Markov chain, i.e., 
$ \mathbf{P}(\theta^i,\theta^j) = \mathbb{P}(X_{t+1} = \theta^j|X_t = \theta^i) = \mathbb{P}(f(\theta^i, N_t)=\theta^j)$. 
The action space for each agent  is given as $\mathcal{U}^i_t = \{0,1\}$ (indicating the two possible routes) for $i=1,\ldots,K, t=1,\ldots,T$. The designer does not take any action (i.e. $\mathcal{U}_t^0 = \emptyset$), its role is just to send messages to each agent based on its information. The designer's message space for each agent is $\mathcal{M}^i_t =\mathcal{U}^i_t = \{0,1\}$. We consider the following information structure. At time $t$, the designer and all agents have full access to the past states and all past actions of the $K$ agents, i.e., $C_t = \{X_{1:t-1}, U_{1:t-1}^{1:K}\}$ (with $C_1 = \emptyset)$). Only the  designer knows the condition of Route $1$ at time $t$, that is, $P_t^0 = \{X_t\}$. Agents don't have any private information i.e., $P_t^i =  \emptyset$ for $i=1,\ldots,K$. At time $t$, the designer generates the messages $M^{1:K}_t$ according to the distribution $g^d_t(X_t, C_t)$.   $g^d_t(m_t^{1:K}|x_t, c_t)$ denotes the probability that messages  $m_t^{1:K}$ are generated  when designer's private and common information at $t$ are $x_t, c_t$, respectively. 

The reward function for agent $i$ at time $t$ is the difference between the condition of the route it chose and the fraction of agents who chose the same route as agent $i$. That is, the reward function can be expressed as follows:
\begin{equation}\label{eq:ex agent reward atomic}
r_t^i(X_t, U_t^{1:K}) =  \begin{cases}
a-\frac{1}{K}\sum_{j=1}^K (1-U_t^j) &\text{if $U_t^i=0$}\\
&~\\
X_t-\frac{1}{K}\sum_{j=1}^K U_t^j &\text{if $U_t^i=1$}
\end{cases}.
\end{equation}

The designer is interested in social welfare and therefore its reward function is given as: $r_t^0(X_t, U_t^{1:K}) = \sum_{i=1}^K r_t^i(X_t, U_t^{1:K})$.

We are interested in the setting where the designer wants to incentivize \emph{obedience} from all agents, i.e., to incentivize strategies $h^{1:K}$ such that $U_t^i = h_t^i(M_t^i, P_t^i, C_t) = M_t^i$ for  $i=1,\ldots,K$ and $t=1,\ldots,T$, while maximizing its total expected reward  under the strategy profile $(g^d, h^{1:K})$. That is, we have the following problem given as:
    \begin{align*}
    &\max_{g^d \in \mathcal{G}^d} J^0(g^d, h^{1:K}) := \mathbb{E}^{(g^d,  h^{1:K})}\left[\sum_{t=1}^T r_t^0(X_t, U_t^{1:K})\right] \\
    &\hspace{10pt}\text{s.t. $h^{1:K}$ satisfies $CISR(g^d)$ as per Definition \ref{def: multi sequential rationality}.}
    \end{align*}

\emph{Solution Using Algorithm \ref{alg:2}:}
The information structure described above  satisfies Assumption \ref{assm:1 multi}. Further, the beliefs $\pi_t$ are given as:
\begin{align}\label{eq:belief adjusted info structure}
    \pi_1(x_1 | c_1) &= p_{x_1}, \notag\\
    ~\ \pi_t(x_t | x_{1:t-1}, u_{1:t-1}^{1:K}) &= \mathbf{P}(x_{t-1}, x_t)\quad\forall t > 1
\end{align}
The $\pi_t$ in \eqref{eq:belief adjusted info structure} are strategy-independent, satisfying Assumption \ref{assm:2 multi}. Hence, we can apply  Algorithm \ref{alg:2} (modified for \emph{$K$ agents} instead of two) to solve the designer's problem. 

    According to Algorithm \ref{alg:2}, at each time $t$, we need to solve a linear program for each possible realization of $x_{1:t-1}, u_{1:t-1}^{1:K}$. This implies we have $\mathcal{O}(2^{nt})$ linear programs to solve. However, we can make use of Theorem \ref{thm:belief comp} to drastically reduce the number of linear programs.  Observe that the obedient strategies $h^i$ do not depend on $c_t$ at all and that the  $\pi_t$ in \eqref{eq:belief adjusted info structure} depends only on $x_{t-1}$ and not on $x_{1:t-2}, u_{1:t-1}^{1:K}$. Therefore,  by Theorem \ref{thm:belief comp}, any two realizations of $c_t$ with the same $x_{t-1}$ will effectively result in the same linear program. Therefore, the number of linear programs to be solved at time $t$ is just the number of possible realizations of $x_{t-1}$, namely $2$.  

\emph{Numerical results:} We implemented Algorithm \ref{alg:2} in Matlab for the following parameters: $K = 10, T=2, a= 1.5, \theta^1 = 1.2, \theta^2 = 2.8, p_{\theta^1} = p_{\theta^2} = 0.5, \mathbf{P}(\theta^1, \theta^1) = \mathbf{P}(\theta^2, \theta^2) = 0.9$.  We observed that for each $x_t, c_t,$ $g^d_t(m_t^{1:K}|x_t,c_t)$ depends on $\Sigma_t = \sum_{i=1}^K m_t^i$.
We obtain the following  messaging strategy for the designer: \\
1. At $t=1$,    $g_1^d(m_1^{1:10} | x_1 = 1.2) = 0.0048$ if $\Sigma_1 = 4$ and $0$ otherwise; and
     $g_1^d(m_1^{1:10} | x_1 = 2.8) = 0.0222$ if $\Sigma_1 = 8$ and $0$ otherwise.\\
2. At $t=2$,
\begin{itemize}
\setlength{\itemsep}{5pt} 
    \item $g_2^d(m_2^{1:10} | x_2 = 1.2, x_1 = 1.2, u_1^{1:10}) = 0.0048$ if $\Sigma_2 = 4$ and $0$ otherwise.
    \item $g_2^d(m_2^{1:10} | x_2 = 2.8, x_1 = 1.2, u_1^{1:10}) = 0.0222$ if $\Sigma_2 = 8$ and $0$ otherwise.
    \item $g_2^d(m_2^{1:10} | x_2 = 1.2, x_1 = 2.8, u_1^{1:10}) = 0.0039$ if $\Sigma_2 = 5$; $0.0001$ if $\Sigma_2 = 6$ and $0$ otherwise.
    \item $g_2^d(m_2^{1:10} | x_2 = 2.8, x_1 = 2.8, u_1^{1:10}) = 0.0557$ if $\Sigma_2 = 9$; $0.4426$ if $\Sigma_2 = 10$ and $0$ otherwise.
   \end{itemize}

\section{Conclusion}\label{sec:conclusion}
We first considered a dynamic information design problem 
 where the designer uses a fixed action strategy and  sends messages to an agent in order to   incentivize it to play a specific strategy. Under certain assumptions on the information
structure of the designer and the agent, we provided an algorithm
for finding a messaging strategy for the designer that optimizes its objective while ensuring that the agent's prespecified strategy satisfies common information based sequential rationality. Our algorithm requires
solving a family of linear programs in a
backward inductive manner. We generalized our approach to allow the designer to jointly optimize both its messaging and its action strategies. We also addressed the designer's problem in the presence of multiple agents. We illustrated our algorithm in a congestion game example.  We believe that the backward inductive and  linear programming  nature of our algorithm is a consequence of the information structure assumptions we made. More general  information structures would likely require nonlinear optimization and may not be solvable by backward inductive methods. 


\appendices
\section{Proof of Lemma \ref{lem:suff1}}\label{appendix: proofofsufflemma}
We need to show that under the condition described in \eqref{eq:suff1} of Lemma \ref{lem:suff1}, we can establish \eqref{eq:cisr} of Definition \ref{def: sequential rationality} for each $t$ and each realization $c_t$. 
We first consider time $t=T$ and any  $c_T \in \mathcal{C}_T$. In this case, the right hand side of \eqref{eq:cisr} can be written as (we omit the superscript $(g^m, h^0, g^1)_T$ in some of the expectations below for convenience)
\begin{align}
    \mathbb{E}&^{(g^m, h^0, g^1)_T}\left[ r_T^1(X_T, U_T^0, U_T^1) | c_T\right] \notag\\
    &= \mathbb{E}\Big[ \mathbb{E}[r_T^1(X_T, h_T^0(P^0_T,c_T), g^1_T(M_T^1,P^1_T,c_T)|c_T,P^1_T,M_T^1] \Bigm| c_T\Big] \notag \\
    &=\sum_{p^1_T,m_T^1} \Big[ \eta_T(p^1_T,m_T^1|c_T) \times\mathbb{E}^{\mu_T(\cdot|m^1_T,p^1_T,c_T)}[r_T^1(X_T, h_T^0(P^0_T,c_T), g^1_T(m_T^1,p^1_T,c_T))] \Big] \notag \\
    &\le\sum_{p^1_T,m_T^1} \Big[ \eta_T(p^1_T,m_T^1|c_T)\times \mathbb{E}^{\mu_T(\cdot|m^1_T,p^1_T,c_T)}[r_T^1(X_T, h_T^0(P^0_T,c_T), h^1_T(m_T^1,p^1_T,c_T))] \Big], \label{eq:app1eq1} 
\end{align}
where  we used \eqref{eq:suff1} in the last inequality. Repeating the above steps with $h^1$ instead of $g^1$ will result in 
\begin{align}
    &\mathbb{E}^{(g^m, h^0, h^1)_T}\left[ r_T^1(X_T, U_T^0, U_T^1) | c_T\right] \notag \\
    &=\sum_{p^1_T,m_T^1} \Big[ \eta_T(p^1_T, m_T^1|c_T) \times 
        \mathbb{E}^{\mu_T(\cdot|m^1_T,p^1_T,c_T)}[r_T^1(X_T, h_T^0(P^0_T,c_T), h^1_T(m_T^1,p^1_T,c_T))] \Big]  \label{eq:app1eq2} 
\end{align}
\eqref{eq:app1eq1} and \eqref{eq:app1eq2} establish \eqref{eq:cisr} of Definition \ref{def: sequential rationality} for time $T$. Further, using the definition of $\mu_T$ from \eqref{eq:eta_conditioned}, the final expression in \eqref{eq:app1eq2} can be written as 
\begin{align}
&\sum_{\substack{x\in\mathcal{X}_T,p^0\in\mathcal{P}_T^0 \\p^1\in\mathcal{P}_T^1, m^1 \in \mathcal{M}_T^1,n\in \mathcal{N}_T}}
\eta_T(x,p^0, p^1, m^1, n | c_T)\Big[r_T^1(x, h_T^0(p^0,c_T), h^1_T(m^1,p^1,c_T)) \Big] \notag \\
&=W_T(c_T)
        \end{align}
We can now proceed inductively. Assume that (i) \eqref{eq:cisr} holds for $t+1$ and each $c_{t+1}$, and (ii) $W_{t+1}(c_{t+1})$ is equal to the left hand side of \eqref{eq:cisr} for each $c_{t+1}$. Consider time $t$ and any $c_t \in \mathcal{C}_t$. In this case, the right hand side of \eqref{eq:cisr} can be written as
\begin{align}
    \mathbb{E}&^{(g^m, h^0, g^1)_{t:T}}\left[ \sum_{k=t}^T r_k^1(X_k, U_k^0, U_k^1) \Bigm| c_t\right] \notag \\
    &=\mathbb{E}^{(g^m, h^0, g^1)_{t:T}}\left[r_t^1(X_t, U_t^0, U_t^1)+\mathbb{E}^{(g^m, h^0, g^1)_{t+1:T}}\left[ \sum_{k=t+1}^T r_k^1(X_k, U_k^0, U_k^1) \Bigm| C_{t+1}\right]  \Bigm| c_t\right] \notag \\
    &\le \mathbb{E}^{(g^m, h^0, g^1)_{t:T}}\left[r_t^1(X_t, U_t^0, U_t^1) + W^1_{t+1}(c_t, Z_{t+1}) \Bigm| c_t\right] \label{eq:app1eq3}
\end{align}
where we used the induction hypothesis in the inequality above. For notational convenience, let 
\[F(c_t, x_t,p^{0,1}_t, u^{0,1}_t,n_t) = r_t^1(x_t, u_t^0, u_t^1) + W^1_{t+1}(c_t, z_{t+1}) ,\]
where $z_{t+1} = \zeta_{t+1}(x_t, p_t^{0,1}, u_t^{0,1}, n_{t}) $. Then, the right hand side of \eqref{eq:app1eq3} can be written as 
    \begin{align}
    \mathbb{E}&^{(g^m, h^0, g^1)_{t:T}}\left[ F(c_t, X_t,P^{0,1}_t, U^0_t,U^1_t,N_t)  | c_t\right] \notag \\
    &=\mathbb{E}\Big[ \mathbb{E}[F(c_t, X_t,P^{0,1}_t, h^0_t(P^0_t,c_t), g^1_t(M^1_t,P^1_t,c_t),N_t)|c_t,P^1_t,M_t^1] \Bigm| c_t\Big] \notag \\
    &=\sum_{p^1_t, m_t^1} \Big[ \eta_t(p^1_t,m_t^1|c_t)\times \mathbb{E}^{\mu_t(\cdot|m^1_t,p^1_t,c_t)}[F(c_t, X_t,P^{0}_t,p^1_t, h^0_t(P^0_t,c_t),g^1_t(m_t^1,p^1_t,c_t),N_t) ] \Big] \label{eq:app1eq8} \\
    &\le\sum_{p^1_t, m_t^1} \Big[ \eta_t(p^1_t,m_t^1|c_t)\times \mathbb{E}^{\mu_t(\cdot|m^1_t,p^1_t,c_t)}[F(c_t, X_t,P^{0}_t,p^1_t, h^0_t(P^0_t,c_t),h^1_t(m_t^1,p^1_t,c_t),N_t) ] \Big] \label{eq:app1eq4}
\end{align}
where we used \eqref{eq:suff1} and the definition of $F$ in the last inequality. Combining \eqref{eq:app1eq3} and \eqref{eq:app1eq4}, we obtain
\begin{align}
 \mathbb{E}&^{(g^m, h^0, g^1)_{t:T}}\left[ \sum_{k=t}^T r_t^1(X_k, U_k^0, U_k^1) \Bigm| c_t\right] \notag \\
    &\le\sum_{p^1_t, m_t^1} \Big[ \eta_t(p^1_t,m_t^1|c_t) \times \mathbb{E}^{\mu_t(\cdot|m^1_t,p^1_t,c_t)}[F(c_t, X_t,P^{0}_t,p^1_t, h^0_t(P^0_t,c_t),h^1_t(m_t^1,p^1_t,c_t),N_t) ] \Big] \label{eq:app1eq6}
\end{align}
Repeating the above steps with $h^1$ instead of $g^1$ will result in
\begin{align}
     &\mathbb{E}^{(g^m, h^0, h^1)_{t:T}}\left[ \sum_{k=t}^T r_t^1(X_k, U_k^0, U_k^1) | c_t\right] \notag \\ 
     &=\sum_{p^1_t, m_t^1} \Big[ \eta_t(p^1_t,m_t^1|c_t) \times \mathbb{E}^{\mu_t(\cdot|m^1_t,p^1_t,c_t)}[F(c_t, X_t,P^{0}_t,p^1_t, h^0_t(P^0_t,c_t),h^1_t(m_t^1,p^1_t,c_t),N_t) ] \Big] \label{eq:app1eq5}
     \end{align}
 \eqref{eq:app1eq6} and \eqref{eq:app1eq5} establish \eqref{eq:cisr} of Definition \ref{def: sequential rationality} for time $t$. Further, using the definition of $\mu_t$ from \eqref{eq:eta_conditioned}, the final expression in \eqref{eq:app1eq5} can be written as  
 \begin{align}
&\sum_{\substack{x\in\mathcal{X}_t,p^0\in\mathcal{P}_t^0 \\p^1\in\mathcal{P}_t^1, m^1 \in \mathcal{M}_t^1,n\in \mathcal{N}_t}}
\eta_t(x,p^0, p^1, m^1, n | c_t)\left[r_t^1(x,u_t^0, u_t^1)+ W_{t+1}^1(c_t, z_{t+1})\right], \label{eq:app1eq7}
     \end{align}
 where $u^0_t=h^0_t(p^0,c_t)$, $u_t^1 = h_t^1(m^1, p^1, c_t)$ and  $z_{t+1} = \zeta_{t+1}(x, p^{0,1}, u_t^{0,1}, n)$.
The expression in \eqref{eq:app1eq7} is identical to the definition of $W_t(c_t)$ in \eqref{eq:agent value function time any t}. This completes the induction argument.
\section{Proof of Lemma \ref{lem:nece1}}\label{appendix: proofofnecelemma}
We provide a proof-by-contradiction argument. Suppose $h^1$ satisfies $CISR(g^m, h^0)$ but \eqref{eq:suff1} is not true for some time $t$, $1 \le t \le T$, and some realizations $c_t, m_t^1,p_t^1 $ with $\eta_t(p_t^1, m_t^1| c_t) > 0$. Let  ${\tau}$ be the largest time index less than or equal to $T$ such that there exists a $\tilde{c}_{\tau}, \tilde{m}_{\tau}^1,\tilde{p}_{\tau}^1 $ with $\eta_{\tau}(\tilde{p}_{\tau}^1, \tilde{m}_{\tau}^1| \tilde{c}_{\tau}) > 0$  where \eqref{eq:suff1} is not true. We will show that each possible value of $\tau$ results in a contradiction.

 Suppose that  $\tau = T$. In this case, we will construct an agent strategy $g^1$ such that \eqref{eq:cisr} is violated which will contradict the fact that $h^1$ satisfies $CISR(g^m, h^0)$.  Consider the agent action strategy $g^1$ that is identical to $h^1$ everywhere except for  realization $\tilde{c}_{T}, \tilde{m}_{T}^1,\tilde{p}_{T}^1 $ of the agent's information at  time $T$.
Define $g_T^1(\tilde{m}_{T}^1,\tilde{p}_{T}^1,\tilde{c}_{T})$ 
as follows
\begin{align}\label{eq:app2eq1}
&g_T^1(\tilde{m}_{T}^1,\tilde{p}_{T}^1,\tilde{c}_{T}) \in\argmax _{u \in \mathcal{U}_T^1} \mathbb{E}^{\mu_t(\cdot| \tilde{m}_{T}^1,\tilde{p}_{T}^1,\tilde{c}_T)}[r_t^1(X_T,h_T^0(P_T^0,\tilde{c}_T),u)]
\end{align}

For the $g^1$ defined above and the common information realization $\tilde{c}_T$, the right hand side of \eqref{eq:cisr} can be written as 
\begin{align}
    \mathbb{E}&^{(g^m, h^0, g^1)_T}\left[ r_T^1(X_T, U_T^0, U_T^1) | \tilde{c}_T\right] \notag \\
    &=\mathbb{E}\Big[ \mathbb{E}[r_T^1(X_T, h_T^0(P^0_T,\tilde{c}_T), g^1_T(M_T^1,P^1_T,\tilde{c}_T)|\tilde{c}_T,P^1_T,M_T^1] \Bigm| \tilde{c}_T\Big] \notag \\
    &=\sum_{p^1_T,m_T^1} \Big[ \eta_T(p^1_T,m_T^1|\tilde{c}_T)\times \mathbb{E}^{\mu_t(\cdot|m^1_T,p^1_T,\tilde{c}_T)}[r_T^1(X_T, h_T^0(P^0_T,\tilde{c}_T), g^1_T(m_T^1,p^1_T,\tilde{c}_T))] \Big] \notag \\
    &> \sum_{p^1_T,m_T^1} \Big[ \eta_T(p^1_T,m_T^1|\tilde{c}_T) \times \mathbb{E}^{\mu_t(\cdot|m^1_T,p^1_T,\tilde{c}_T)}[r_T^1(X_T, h_T^0(P^0_T,\tilde{c}_T), h^1_T(m_T^1,p^1_T,\tilde{c}_T))] \Big] \label{eq:app2.a} \\
    &=\mathbb{E}^{(g^m, h^0, h^1)_T}\left[ r_T^1(X_T, U_T^0, U_T^1) | \tilde{c}_T\right] \label{eq:app2.b}
    \end{align}
where the  inequality in \eqref{eq:app2.a} is true because $g^1$ and $h^1$ are identical everywhere except at the realization $\tilde{c}_{T}, \tilde{m}_{T}^1,\tilde{p}_{T}^1$ of agent's information, and for this critical realization we have 
\begin{align}\label{eq:app2eq3}
    &\eta_T(\tilde{p}^1_T, \tilde{m}_T^1|\tilde{c}_T) \times 
        \mathbb{E}^{\mu_t(\cdot| \tilde{m}_{T}^1,\tilde{p}_{T}^1,\tilde{c}_T)}[r_t^1(X_T, h_T^0(P^0_T,\tilde{c}_T), g^1_T(\tilde{m}_T^1,\tilde{p}^1_T,\tilde{c}_T))]\notag \\
        &\hspace{5pt}> \eta_T(\tilde{p}^1_T, \tilde{m}_T^1|\tilde{c}_T) \times 
        \mathbb{E}^{\mu_t(\cdot| \tilde{m}_{T}^1,\tilde{p}_{T}^1,\tilde{c}_T)}[r_t^1(X_T, h_T^0(P^0_T,\tilde{c}_T), h^1_T(\tilde{m}_T^1,\tilde{p}^1_T,\tilde{c}_T))]
\end{align}
since $\eta_T(\tilde{p}^1_T, \tilde{m}_T^1|\tilde{c}_T) >0$ and $h_T^1$ is not an $\argmax$ of the right hand side of \eqref{eq:suff1} for time ${T}$ and the given realization $\tilde{c}_{T}, \tilde{m}_{T}^1,\tilde{p}_{T}^1$. 

Thus, \eqref{eq:app2.b} shows that the strategy $g^1$ constructed above violates \eqref{eq:cisr} which contradicts the fact that $h^1$ satisfies $CISR(g^m, h^0)$. Thus, we must have that $\tau \ne T$.

We can now consider other possible values of $\tau$. Suppose that $\tau =l$, where $1 \le l <T$ and we have $\tilde{c}_{l}, \tilde{m}_{l}^1,\tilde{p}_{l}^1$ with $\eta_{l}(\tilde{p}_{l}^1, \tilde{m}_{l}^1| \tilde{c}_{l}) > 0$  where \eqref{eq:suff1} is not true. Consider an action strategy $g^1$ that is identical to $h^1$ everywhere except for  realization $\tilde{c}_{l}, \tilde{m}_{l}^1,\tilde{p}_{l}^1 $ of the agent's information at  time $l$. Define $g_l^1(\tilde{m}_{l}^1,\tilde{p}_{l}^1,\tilde{c}_{l})$ 
as follows
\begin{align}
     &g_l^1(\tilde{m}_{l}^1,\tilde{p}_{l}^1,\tilde{c}_{l}) \in \argmax_{u \in \mathcal{U}_l^1} \mathbb{E}^{\mu_l(\cdot|\tilde{m}_{l}^1,\tilde{p}_{l}^1,\tilde{c}_{l})}[r_l^1(X_l, h_l^0(P_l^0, c_l),  u)+ W^1_{l+1}(c_l, Z_{l+1})], 
\end{align}
where $Z_{l+1} = \zeta_{l+1}(X_l, P_l^{0}, p^1_l, h_l^0(P_l^0, c_l) , u, N_{l})$.

For the $g^1$ defined above and the common information realization $\tilde{c}_l$, we can follow the steps used in \eqref{eq:app1eq3} to write
\begin{align}
    \mathbb{E}&^{(g^m, h^0, g^1)_{l:T}}\left[ \sum_{k=l}^T r_k^1(X_k, U_k^0, U_k^1) \Bigm| \tilde{c}_l\right] \notag \\
    &= \mathbb{E}^{(g^m, h^0, g^1)_{l:T}}\left[r_l^1(X_l, U_l^0, U_l^1)+\mathbb{E}^{(g^m, h^0, g^1)_{l+1:T}}\Bigg[ \sum_{k=l+1}^T r_k^1(X_k, U_k^0, U_k^1) \Bigm| C_{l+1}\Bigg]  \Bigm| \tilde{c}_l\right] \notag \\
    &= \mathbb{E}^{(g^m, h^0, g^1)_{l:T}}\left[ r_l^1(X_l, U_l^0, U_l^1) +\mathbb{E}^{(g^m, h^0, h^1)_{l+1:T}}\left[ \sum_{k=l+1}^T r_k^1(X_k, U_k^0, U_k^1) \Bigm| C_{l+1}\right]  \Bigm| \tilde{c}_l\right] \label{eq:app2.d} \\
    &= \mathbb{E}^{(g^m, h^0, g^1)_{l:T}}\left[r_l^1(X_l, U_l^0, U_l^1) + W^1_{l+1}(c_l, Z_{l+1}) \Bigm| \tilde{c}_l\right] \label{eq:app2.c}
\end{align}
where we used the fact that $g^1$ and $h^1$ are identical for time $k > l$ in \eqref{eq:app2.d} and the fact proved in Appendix \ref{appendix: proofofsufflemma} that $W^1_{l+1}(c_{l+1})$ is the left hand side of \eqref{eq:cisr} for each $c_{l+1}$.
Now, following the steps  used in Appendix \ref{appendix: proofofsufflemma} to get  \eqref{eq:app1eq8} from \eqref{eq:app1eq3}, \eqref{eq:app2.c} can be written as
\begin{align}\label{eq:app2eq5}
        &\sum_{p^1_l, m_l^1} \Big[ \eta_l(p^1_l,m_l^1|\tilde{c}_l)\times \mathbb{E}^{\mu_l(\cdot| m_{l}^1,p_{l}^1,\tilde{c}_l)}[F(c_l, X_l,P^{0,1}_l,  h^0_l(P^0_l, \tilde{c}_l),g^1_l(m_l^1,p^1_l,\tilde{c}_l),N_l)] \Big]
\end{align}
Using \eqref{eq:app2eq5}  and  arguments similar to those  in \eqref{eq:app2.a} and \eqref{eq:app2.b}, we obtain
\begin{align}
  \mathbb{E}&^{(g^m, h^0, g^1)_{l:T}}\left[ \sum_{k=l}^T r_k^1(X_k, U_k^0, U_k^1) \Bigm| \tilde{c}_l\right]  \notag \\
  &=\sum_{p^1_l, m_l^1} \Big[ \eta_l(p^1_l,m_l^1|\tilde{c}_l) \times \mathbb{E}^{\mu_l(\cdot| m_{l}^1,p_{l}^1,\tilde{c}_l)}[F(c_l, X_l,P^{0,1}_l, h^0_l(P^0_l, \tilde{c}_l),g^1_l(m_l^1,p^1_l,\tilde{c}_l),N_l)] \Big] \notag \\
  &>\sum_{p^1_l, m_l^1} \Big[ \eta_l(p^1_l,m_l^1|\tilde{c}_l) \times \mathbb{E}^{\mu_l(\cdot| m_{l}^1,p_{l}^1,\tilde{c}_l)}[F(c_l, X_l,P^{0,1}_l, h^0_l(P^0_l, \tilde{c}_l),h^1_l(m_l^1,p^1_l,\tilde{c}_l),N_l)] \Big] \notag \\
  &=\mathbb{E}^{(g^m, h^0, h^1)_{l:T}}\left[ \sum_{k=l}^T r_k^1(X_k, U_k^0, U_k^1) \Bigm| \tilde{c}_l\right] \label{eq:app2.e}
\end{align}
\eqref{eq:app2.e} shows that the strategy $g^1$ constructed above violates \eqref{eq:cisr} which contradicts the fact that $h^1$ satisfies $CISR(g^m, h^0)$. Thus, we must have that $\tau \ne l$. 

The above argument shows that $\tau$ cannot take any value in $\{T,T-1,\ldots,1\}$. Therefore, \eqref{eq:suff1} must hold for each $t$ and each $c_t, m_t^1,p_t^1 $ with $\eta_t(p_t^1, m_t^1| c_t) > 0$.

\section{Proof of Theorem \ref{thm:alg}}\label{appendix: proofofalgorithm}
As discussed in Section \ref{sec:SA}, Problem \ref{prob: backup} is equivalent to the Global Problem formulated in Section \ref{sec:decomposition}. We will show that the messaging strategy obtained from Algorithm \ref{alg:1} is an optimal solution to the Global Problem. 

Let $g^{m}_{1:T}, \eta_{1:T}, V_{1:T}, W_{1:T}^{1}$ be obtained using the sequence of linear programs in Algorithm \ref{alg:1}. That is, for each $t$ and each $c_t$, $(\eta_t(\cdot|c_t), g^{m}_t(\cdot|\cdot,c_t),W_t^{1}(c_t), V_t(c_t))$ is an optimal solution for $\mathbf{LP_t}(c_t)$. It is straightforward to verify that the  $g^{m}_{1:T}, \eta_{1:T},  W_{1:T}^{1}$ obtained from Algorithm \ref{alg:1} form a feasible solution of the Global Problem since they satisfy all the constraints of the Global Problem.  

Let $(g^{m,global}, \eta_{1:T}^{global}, W^{1,global}_{1:T})$ be any  feasible solution for the Global Problem.
Let $g^{global}$ denote the strategy profile $(g^{m,global}, h^0, h^1)$ and define the following reward-to-go functions for the designer under the strategy profile $g^{global}$:
\allowdisplaybreaks
    \begin{align*}
        V^{global}_t(c_t) :=           \mathbb{E}^{g^{global}_{t:T}}\left[\sum_{k=t}^T r_k^0(X_k, U_k^0, U_k^1) \Big| C_t = c_t\right]
    \end{align*}
    Note that the objective value of the Global Problem under $(g^{m,global}, \eta_{1:T}^{global}, W^{1,global}_{1:T})$ can be written as
   \begin{equation}
      J^0(g^{m,global},h^0,h^1) =  \mathbb{E}[V^{global}_1(C_1)].
   \end{equation}  
    We want to show that for all $t=T, T-1, \ldots,1$ and for each $c_t \in \mathcal{C}_t$, we have $V^{global}_t(c_t) \leq V_t(c_t)$. In other words, the value functions $V_t(\cdot)$ obtained from Algorithm \ref{alg:1} dominate the designer's reward-to-go functions $V^{global}_t(\cdot)$ for any feasible solution of the Global Problem.
    
    \emph{Base case  $(t = T)$:} Fix a $c_T \in \mathcal{C}_T$.  Then
    \begin{align}
         V^{global}_T(c_T) &= \mathbb{E}^{g^{global}_T}\left[r_T^0(X_T, U_T^0, U_T^1) | C_T = c_T\right] \notag \\
         &=\mathbb{E}^{\eta^{global}_T}[r_T^0(X_T, h_T^0(P^0_T,c_T),h_T^1(M^1_T,P^1_T,c_T))| C_T = c_T] \notag \\
         &=\sum_{\substack{x\in\mathcal{X}_T, p^0\in\mathcal{P}_T^0\\p^1\in\mathcal{P}_T^1,  m^1 \in \mathcal{M}_T^1, n\in \mathcal{N}_T }}
\eta^{global}_T(x,p^0, p^1, m^1, n | c_T) [r_T^0(x,h_T^0(p^0,c_T), h_T^1(m^1,p^1,c_T))]
    \end{align}
    
    It is now easy to check that $(\eta^{global}_T(\cdot|c_T), g^{m,global}_T(\cdot|\cdot,c_T),$ $ W^{1,global}_T(c_T), V^{global}_T(c_T))$ is a feasible solution for $\mathbf{LP_T}(c_T)$. Thus, it follows that
    \begin{equation}
        V^{global}_T(c_T) \le V_T(c_T) \label{eq:app3.b}
    \end{equation} 
    since $V_T(c_T)$ comes from the optimal solution for $\mathbf{LP_T}(c_T)$.

    \emph{Induction step:} Now suppose that  $V^{global}_t(\cdot) \leq V_{t}(\cdot)$ holds for all $t \ge l+1$.  Fix a $c_l \in \mathcal{C}_l$, we obtain
    \begin{align}
        V^{global}_l(c_l) &:=           \mathbb{E}^{g^{global}_{l:T}}\left[\sum_{k=l}^T r_k^0(X_k, U_k^0, U_k^1) \Big| C_l = c_l\right] \notag\\
        &= \mathbb{E}^{g^{global}_{l:T}}\Big[r_l^0(X_l, U_l^0, U_l^1)+ V^{global}_{l+1}(C_{l+1}) \Big| C_l = c_l\Big] \notag \\
        &=\mathbb{E}^{\eta^{global}_l}\Big[r_l^0(X_l, U_l^0, U_l^1)+ V^{global}_{l+1}(c_l,Z_{l+1}) \Big| C_l = c_l\Big] \notag \\
        &=\sum_{\substack{x\in\mathcal{X}_l, p^0\in\mathcal{P}_l^0\\p^1\in\mathcal{P}_l^1,  m^1 \in \mathcal{M}_l^1, n\in \mathcal{N}_l }}
\eta_l(x,p^0, p^1, m^1, n| c_l)[r_l^0(x,u_l^0, u_l^1)+ V^{global}_{l+1}(c_l, z_{l+1})], \label{eq:app3.a}
            \end{align}
      where $u^0_l=h^0_l(p^0,c_l)$, $u_l^1 = h_l^1(m^1, p^1, c_l)$ and  $z_{l+1} = \zeta_{l+1}(x, p^{0, 1}, u_l^{0,1}, n)$. By the induction hypothesis, $V^{global}_{l+1}(\cdot) \le V_{l+1}(\cdot)$. Hence  the expression in \eqref{eq:app3.a} is upper bounded as follows    
\begin{align}
 \eqref{eq:app3.a}   &\le \sum_{\substack{x\in\mathcal{X}_l, p^0\in\mathcal{P}_l^0\\p^1\in\mathcal{P}_l^1,  m^1 \in \mathcal{M}_l^1, n\in \mathcal{N}_l }}
\eta_l(x,p^0, p^1, m^1, n| c_l)[r_l^0(x,u_l^0, u_l^1)+ V_{l+1}(c_l, z_{l+1})]  \label{eq:app3eq1} \\
&:= \hat{V}_l(c_l)
\end{align}
where $u^0_l=h^0_l(p^0,c_l)$, $u_l^1 = h_l^1(m^1, p^1, c_l)$ and  $z_{l+1} = \zeta_{l+1}(x, p^{0, 1}, u_l^{0,1}, n)$. 
    
          It is now easy to check that $(\eta_l^{global}(\cdot|c_l), g^{m,global}_l(\cdot|\cdot,c_l),$ $ W^{1,global}_l(c_l), \hat{V}_l(c_l))$ is a feasible solution for $\mathbf{LP_l}(c_l)$. Thus, it follows that $\hat{V}_l(c_l) \le V_l(c_l)$ (since $V_l(c_l)$ comes from the optimal solution for $\mathbf{LP_l}(c_l)$).  Combining this with \eqref{eq:app3eq1}, we get $V^{global}_l(c_l) \le \hat{V}_l(c_l) \le V_l(c_l)$. This completes the induction argument.

Hence,  at time $1$, we have that $V^{global}_{1}(\cdot) \leq V_{1}(\cdot)$.
    Therefore, the objective value of the Global Problem under $(g^{m,global},$ $ \eta_{1:T}^{global}, W^{1,global}_{1:T})$, which is equal to $\mathbb{E}[V^{global}_1(C_1)] $, satisfies 
    \begin{align}
     J^0(g^{m,global},h^0,h^1) =    \mathbb{E}[V^{global}_1(C_1)] \le \mathbb{E}[V_1(C_1)]. \label{eq:app3eq2}
    \end{align} 
    Repeating the above arguments with $g^{m}_{1:T}, \eta_{1:T},  W_{1:T}^{1}$ obtained from Algorithm \ref{alg:1} instead of  $(g^{m,global}, \eta_{1:T}^{global}, W^{1,global}_{1:T})$ will change all inequalities to equalities. 
    In particular, we will get
    \begin{align}
     J^0(g^{m},h^0,h^1) =    \mathbb{E}[V_1(C_1)]. \label{eq:app3eq2a}
    \end{align} 
 Comparing \eqref{eq:app3eq2} and \eqref{eq:app3eq2a}, it is clear that the messaging strategy $g^{m}$ obtained from Algorithm \ref{alg:1} is optimal for the Global Problem and hence for Problem \ref{prob: backup}.

\section{Proof of Theorem \ref{thm:cisr equivalent linear}}\label{app:multi equiv}

Consider agent $i$ ($i=1,2$) and a common information realization $c_t$ at time $t$. Let $j =- i$ be the other agent. We define the following distribution on $X_t, P^{0,j}_t, U_t^0, N_t$ using $\eta_t(\cdot|c_t)$ from Definition \ref{def:eta multi}:
\begin{align}\label{eq:eta_conditioned multi}
    \mu^i_t(x_t,p^{0,j}_t, m^j_t, &u_t^0, n_t|m^i_t,p^i_t,c_t) = \frac{\eta_t(x_t,p^{0,j}_t,p_t^i, m_t^i, m_t^j, u_t^0, n_t | c_t)}{\eta_t(p_t^i, m_t^i | c_t)}.
\end{align}
We now present a lemma that parallels Lemmas \ref{lem:suff1} and \ref{lem:nece1} established for the one designer and one agent case. This result establishes necessary and sufficient conditions for ``$h^i$ satisfies $CISR(g^d, h^{-i})$''.
\begin{lemma}\label{lem:suff nece multi}
    $h^i$ satisfies $CISR(g^d, h^{-i})$ if and only if the following statement is true  for  each $t$, and  for all $c_t \in \mathcal{C}_t, m_t^i \in \mathcal{M}_t^i,p_t^i \in \mathcal{P}_t^i $ such that $\eta_t(p_t^i, m_t^i| c_t) > 0$:
    \begin{align}
         &h_t^i(m_t^i, p_t^i, c_t) \in \argmax_{u \in \mathcal{U}_t^i} \mathbb{E}^{\mu^i_t(\cdot|m^i_t,p^i_t,c_t)}[r_t^i(X_t, U_t^0, h_t^j(M_t^j,  P_t^j, c_t), u)+W^i_{t+1}(c_t, Z_{t+1})| M_t^i = m_t^i, P_t^i = p_t^i,C_t=c_t], \label{eq:suff1 multi}
        \end{align}
        where $j=- i$ and $Z_{t+1}$ in \eqref{eq:suff1 multi} is the common information increment at time $t+1$ defined  according to (\ref{equ: info increment2}) with control actions $U_t^0, U_t^j = h_t^j(M_t^j, P_t^j, c_t), U_t^i = u$ and the expectation is with respect to the distribution $\mu^i_t(\cdot|m^i_t,p^i_t,c_t)$ defined in ~\eqref{eq:eta_conditioned multi}.
\end{lemma}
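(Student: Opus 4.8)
The plan is to follow the template of the proofs of Lemmas \ref{lem:suff1} and \ref{lem:nece1} in Appendices \ref{appendix: proofofsufflemma} and \ref{appendix: proofofnecelemma}, treating agent $j=-i$'s fixed strategy $h^j$ and the designer's fixed strategy $g^d$ as part of the environment faced by agent $i$. For the sufficiency direction I would run a backward induction on $t$ that simultaneously establishes two claims for every $t$ and every realization $c_t$: (i) $W_t^i(c_t)$ as defined in \eqref{eq:multi agent value function time any t} equals the left-hand side of \eqref{eq:cisr multi}, i.e.\ the reward-to-go for agent $i$ under $(g^d, h^i, h^j)_{t:T}$ given $c_t$; and (ii) for any deviation $g^i \in \mathcal{G}^i$, the reward-to-go under $(g^d, g^i, h^j)_{t:T}$ given $c_t$ is at most $W_t^i(c_t)$. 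The base case $t=T+1$ is trivial. For the inductive step I would condition the time-$t$ reward-to-go first on $C_t=c_t$ and then further on $(M_t^i,P_t^i)=(m_t^i,p_t^i)$; under the belief $\eta_t(\cdot|c_t)$ of Definition \ref{def:eta multi}, the conditional law of $(X_t,P_t^{0,j},M_t^j,U_t^0,N_t)$ is precisely $\mu_t^i(\cdot|m_t^i,p_t^i,c_t)$ of \eqref{eq:eta_conditioned multi}. Replacing the continuation term by $W_{t+1}^i(c_t,Z_{t+1})$ using the induction hypothesis and then invoking \eqref{eq:suff1 multi} for each fixed $(m_t^i,p_t^i)$ with $\eta_t(p_t^i,m_t^i|c_t)>0$ yields (ii); running the identical computation with $g^i$ replaced by $h^i$ and without the inequality yields (i) and identifies the result with the expression in \eqref{eq:multi agent value function time any t}.

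For the necessity direction I would argue by contradiction exactly as in Appendix \ref{appendix: proofofnecelemma}: suppose $h^i$ satisfies $CISR(g^d,h^{-i})$ but \eqref{eq:suff1 multi} fails for some $t$, $c_t$, $m_t^i$, $p_t^i$ with $\eta_t(p_t^i,m_t^i|c_t)>0$, and let $\tau$ be the largest such time. I would construct $g^i$ identical to $h^i$ except that at the offending realization $(\tilde m_\tau^i,\tilde p_\tau^i,\tilde c_\tau)$ it plays a genuine maximizer of the right-hand side of \eqref{eq:suff1 multi}. Since $g^i$ agrees with $h^i$ for all times $k>\tau$, the continuation reward-to-go from $\tau+1$ given common information $c_{\tau+1}$ remains $W_{\tau+1}^i(c_{\tau+1})$ — here Assumption \ref{assm:2 multi} is what makes $W_{\tau+1}^i$ well defined independently of how $c_{\tau+1}$ was reached. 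Conditioning on $\tilde c_\tau$, decomposing over $(m_\tau^i,p_\tau^i)$ as in the sufficiency step, and using that the critical coefficient $\eta_\tau(\tilde p_\tau^i,\tilde m_\tau^i|\tilde c_\tau)>0$ multiplies a strictly larger conditional expectation while every other term is unchanged, I get that the reward-to-go under $g^i$ at $\tilde c_\tau$ strictly exceeds that under $h^i$, contradicting \eqref{eq:cisr multi} at time $\tau$. Hence no such $\tau$ exists and \eqref{eq:suff1 multi} holds everywhere it is required.

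The main obstacle I anticipate is the bookkeeping around the joint generation of the messages: because $g_t^d$ produces the triplet $(M_t^1,M_t^2,U_t^0)$ jointly, the pair $(M_t^j,U_t^0)$ need not be conditionally independent of $M_t^i$ given $C_t$, so the deviation analysis must be carried out through the correctly conditioned measure $\mu_t^i$ rather than any product form. Relatedly, one must be careful that the continuation value $W_{t+1}^i$ already encodes agent $j$ playing $h^j$ (and the designer playing $g^d$) at every future stage — this is exactly what allows a single-stage deviation by agent $i$ to be analyzed without explicitly tracking agent $j$'s downstream responses. Apart from these points, the argument is a routine adaptation of Appendices \ref{appendix: proofofsufflemma} and \ref{appendix: proofofnecelemma}, with $(X_t,P_t^{0,j},M_t^j,U_t^0,N_t)$ in the role previously played by $(X_t,P_t^0,N_t)$.
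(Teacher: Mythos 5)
Your proposal is correct and follows essentially the same route as the paper's proof in Appendix \ref{app:multi equiv}: sufficiency by backward induction simultaneously showing that $W_t^i(c_t)$ equals the reward-to-go under $(g^d,h^i,h^j)_{t:T}$ and that any deviation's reward-to-go is bounded by it, and necessity by a one-stage-deviation contradiction at the largest offending time $\tau$. Your remark about handling the joint generation of $(M_t^1,M_t^2,U_t^0)$ through the correctly conditioned measure $\mu_t^i$ rather than a product form is exactly the point the paper's definition \eqref{eq:eta_conditioned multi} is designed to address.
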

\begin{proof}[Proof of Lemma \ref{lem:suff nece multi}]
   \emph{Sufficiency:}  We  first show that under the condition described in Lemma \ref{lem:suff nece multi}, we can establish \eqref{eq:cisr multi} of Definition \ref{def: multi sequential rationality}. We first consider time $t=T$ and any $c_T \in \mathcal{C}_T$. Using analogous arguments from \eqref{eq:app1eq1} and \eqref{eq:app1eq2}, we obtain 
    \begin{align}
    \mathbb{E}&^{(g^d, g^i, h^j)_T}\left[ r_T^i(X_T, U_T^{0,j,i}) | c_T\right] \notag \\
    &=\sum_{p^i_T,m_T^i} \Big[ \eta_T(p^i_T,m_T^i|c_T)\times \mathbb{E}^{\mu_T^i(\cdot|m^i_T,p^i_T,c_T)}[r_T^i(X_T, U_t^0, h^j_T(M_T^j,P^j_T,c_T), g^i_T(m_T^i,p^i_T,c_T))] \Big], \notag \\
    &\le\sum_{p^i_T,m_T^i} \Big[ \eta_T(p^i_T,m_T^i|c_T) \times \mathbb{E}^{\mu_T^i(\cdot|m^i_T,p^i_T,c_T)}[r_T^i(X_T, U_t^0, h^j_T(M_T^j,P^j_T,c_T), h^i_T(m_T^i,p^i_T,c_T))] \Big], \notag \\
    & = \mathbb{E}^{(g^d, h^i, h^j)_T}\left[ r_T^i(X_T, U_T^{0,j,i}) | c_T\right].\label{eq:app4eq1} 
\end{align}
\eqref{eq:app4eq1} establishes \eqref{eq:cisr multi} of Definition \ref{def: multi sequential rationality} for time $T$. Further, using the definition of $\mu_T^i$ from \eqref{eq:eta_conditioned multi}, the final expression of RHS in \eqref{eq:app4eq1} can be written as 
    \begin{align}
&\sum_{\substack{x\in\mathcal{X}_T,p^0\in\mathcal{P}_T^0,p^i\in\mathcal{P}_T^i\\p^j\in\mathcal{P}_T^j, m^i \in \mathcal{M}_T^i,m^j \in \mathcal{M}_T^j \\ u_T^0\in \mathcal{U}_T^0, n\in \mathcal{N}_T}}
\eta_T(x,p^{0,j}, p^i, m^i, m^j, u_T^0, n | c_T)\Big[r^i_T(x,u_T^0, h_T^j(m^j, p^j, c_T), h_T^i(m^i, p^i, c_T))\Big] \notag \\
&=W^i_T(c_T).
\end{align}
 We can now proceed inductively. Assume that (i) \eqref{eq:cisr multi} holds for $t+1$ and each $c_{t+1}$, and (ii) $W_{t+1}(c_{t+1})$ is equal to the left hand side of \eqref{eq:cisr multi} for each $c_{t+1}$. Consider time $t$ and any $c_t \in \mathcal{C}_t$. As in \eqref{eq:app1eq3}, the right hand side of \eqref{eq:cisr multi} can be bounded 
 \allowdisplaybreaks
\begin{align}
    &\mathbb{E}^{(g^d, g^i, h^j)_{t:T}}\left[ \sum_{k=t}^T r_k^i(X_k, U_k^{0,j,i}) \Bigm| c_t\right] \le \mathbb{E}^{(g^d, g^i, h^j)_{t:T}}\left[r_t^i(X_t, U_t^{0,j,i}) + W^i_{t+1}(c_t, Z_{t+1}) \Bigm| c_t\right] \label{eq:app4eq2}
\end{align}
where we used the induction hypothesis in the inequality above. For notational convenience, let 
\[F^i(c_t, x_t,p^{0,j,i}_t, u^{0,j,i}_t,n_t) = r_t^i(x_t, u_t^{0, j, i}) + W^i_{t+1}(c_t, z_{t+1}) ,\]
where $z_{t+1} = \zeta_{t+1}(x_t, p_t^{0,j,i}, u_t^{0,j,i}, n_{t})$. Then, the right hand side of \eqref{eq:app4eq2} can be bounded using  arguments from \eqref{eq:app1eq8} and \eqref{eq:app1eq4} by 
\begin{align}
    &\sum_{p^i_t, m_t^i} \Big[ \eta_t(p^i_t,m_t^i|c_t) \times \mathbb{E}^{\mu_t(\cdot|m^1_t,p^1_t,c_t)}[F^i(c_t, X_t,P^{0,j}_t,p^i_t, U_t^0, h^j(M_t^j, P_t^j, c_t) ,h^i_t(m_t^i,p^i_t,c_t),N_t) ] \Big] \label{eq:app4eq3}\\
    &=\mathbb{E}^{(g^d, h^i, h^j)_{t:T}}\left[ \sum_{k=t}^T r_k^i(X_k, U_k^{0,j,i}) \Bigm| c_t\right].\label{eq:app4eq4}
\end{align}
\eqref{eq:app4eq2}, \eqref{eq:app4eq3} and \eqref{eq:app4eq4} establish \eqref{eq:cisr multi} of Definition \ref{def: multi sequential rationality} for time $t$. Further, using the definition of $\mu_t^i$ from \eqref{eq:eta_conditioned multi}, the  expression in \eqref{eq:app4eq3} can be written as
\begin{align}
&\sum_{\substack{x\in\mathcal{X}_t,p^0\in\mathcal{P}_t^0,p^i\in\mathcal{P}_t^i\\p^j\in\mathcal{P}_t^j, m^i \in \mathcal{M}_t^i,m^j \in \mathcal{M}_t^j \\ u_t^0\in \mathcal{U}_t^0, n\in \mathcal{N}_t}}
\eta_t(x,p^{0,j}, p^i, m^i, m^j, u_t^0, n | c_t) \Big[r^i_t(x,u_t^{0,j,i})+ W_{t+1}^i(c_t, z_{t+1})\Big],
\label{eq:app4eq5}
\end{align}
where $u_t^i = h_t^i(m^i, p^i, c_t), u_t^j = h_t^j(m^j, p^j, c_t)$ and $z_{t+1} = \zeta_{t+1}(x_t, p_t^{0,j,i}, u_t^{0,j,i}, n_{t})$. The expression in \eqref{eq:app4eq5} is identical to the definition of $W^i_t(c_t)$ in \eqref{eq:multi agent value function time any t}. This completes the induction argument. Hence, \eqref{eq:cisr multi} holds for all $t$.

 \emph{Necessity:} To show that the condition in Lemma \ref{lem:suff nece multi} is necessary for \eqref{eq:cisr multi}, we provide an outline of a proof-by-contradiction argument similar to that in Appendix \ref{appendix: proofofnecelemma}. Suppose $h^i$ satisfies $CISR(g^d, h^{-i})$ but \eqref{eq:suff1 multi} is not true for some time $t$, and some realizations $c_t, m_t^i,p_t^i $ with $\eta_t(p_t^i, m_t^i| c_t) > 0$. Let  ${\tau}$ be the largest time index such that there exists a $c_{\tau}, m_{\tau}^i,p_{\tau}^i $ with $\eta_{\tau}(p_{\tau}^i, m_{\tau}^i| c_{\tau}) > 0$  where \eqref{eq:suff1 multi} is not true. Then, we can construct a new strategy $g^i$ that is identical to $h^i$ everywhere except for time ${\tau}$ and realizations  $c_{\tau}, m_{\tau}^i,p_{\tau}^i $. Define $g^i_{\tau}(m_{\tau}^i,p^i_{\tau},c_{\tau})$ to be an $\argmax$ of the right hand side of \eqref{eq:suff1 multi} for time ${\tau}$ and the given realizations $c_{\tau}, m_{\tau}^i,p_{\tau}^i $. Then, it can be verified that this new strategy violates \eqref{eq:cisr multi} at time $\tau$. Thus, we have a contradiction. 
\end{proof}
Having established the necessary and sufficient condition of Lemma \ref{lem:suff nece multi}, we can now follow  steps similar to those in \eqref{eq:lp1}, \eqref{eq:lp2} and \eqref{eq:lp3} to reformulate the condition in \eqref{eq:suff1 multi}  as inequalities that are linear in $\eta_t$ and obtain \eqref{eq:multilp} as  necessary and sufficient condition for ``$h^i$ satisfies $CISR(g^d, h^{-i})$''. The same argument can be repeated for agent $j =-i$.

\bibliography{journal_2025.bib}

\end{document}